\newtheorem{thm}{Theorem} 
\newtheorem{cor}{Corollary} 
\newtheorem{prop}{Proposition}
\DeclareMathOperator{\Exp}{E}
\DeclareMathOperator{\Hop}{H}
\DeclareMathOperator{\Gop}{G}
\newcommand{\G}{{\mathcal G}}
\newcommand{\NN}{\mathbb{N}}
\newcommand{\RR}{\mathbb{R}}
\DeclareMathOperator{\ve}{vec}
\DeclareMathOperator*{\argmin}{arg\,min}
\newcommand{\X}{{\mathcal{X}}}
\newcommand{\blind}{0}
\begin{document}

\def\spacingset#1{\renewcommand{\baselinestretch}%
{#1}\small\normalsize} \spacingset{1}


\if0\blind
{
  \title{\bf Penalized estimation in large-scale generalized linear array models}
  \author{Adam Lund\thanks{Part of the Dynamical Systems Interdisciplinary Network, University of Copenhagen.}\\
Department of Mathematical Sciences, 
University of Copenhagen,\\
   Martin Vincent\thanks{
Supported by The Danish Cancer Society and The Danish Strategic Research Council/Innovation Fund Denmark.}\hspace{.2cm} \\
   Department of Mathematical Sciences, 
University of Copenhagen\\
    and \\
Niels Richard Hansen$^\ast$ \\
   Department of Mathematical Sciences, 
University of Copenhagen
}
  \maketitle
} \fi

\if1\blind
{
  \bigskip
  \bigskip
  \bigskip
  \begin{center}
    {\LARGE\bf Title}
\end{center}
  \medskip
} \fi

\bigskip
\begin{abstract}
  Large-scale generalized linear array models (GLAMs) can be
  challenging to fit. Computation and storage of its tensor product
  design matrix can be impossible due to time and memory constraints,
  and previously considered design matrix free algorithms do not scale
  well with the dimension of the parameter vector. A new design matrix free algorithm is
  proposed for computing the penalized maximum likelihood estimate
  for GLAMs, which, in particular, handles
  nondifferentiable penalty functions. The proposed algorithm is
  implemented and available via the R package \verb+glamlasso+. It
  combines several ideas -- previously considered separately -- to
  obtain sparse estimates while at the same time efficiently
  exploiting the GLAM structure. In this paper the convergence of the
  algorithm is treated and the performance of its implementation
  is investigated and compared to that of \verb+glmnet+ on simulated
  as well as real data. It is shown that the computation time for
  \verb+glamlasso+ scales favorably with the size of the problem when
  compared to \verb+glmnet+.  Supplemental materials are available online.
  
\end{abstract}

\noindent%
{\it Keywords:}  
penalized estimation, generalized linear array models, 
  proximal gradient algorithm, multidimensional smoothing
\vfill

\newpage
\spacingset{1.} 

\section{Introduction}\label{sec:intro}
The generalized linear array models (GLAMs) were
introduced in \cite{currie2006} as generalized linear models (GLMs) where the
observations can be organized in an array and the design matrix has a
tensor product structure. One main application treated in
\cite{currie2006} --  that will also be central to this paper -- is multivariate
smoothing where data is observed on a multidimensional grid.


In this paper we present results on 3-dimensional smoothing for
  two quite different real data sets where the aim was to extract a
  smooth mean signal. The first data set contains voltage sensitive
  dye recordings of spiking neurons in a live ferret brain and was
  modeled in a Gaussian GLAM framework. The second data set contains
  all registered Medallion taxi pick ups in New York City during 2013
  and was modeled in a Poisson GLAM framework. In both examples we
  fitted an $\ell_1$-penalized B-spline basis expansion to obtain a
  clear signal. For the taxi data we also demonstrate how the
  $\ell_1$-penalized fit lead to a lower error, compared to the
  non-penalized fit, when trying to predict missing observations.
Other potential applications include factorial designs and contingency
tables.

 \cite{currie2006} showed how the structure of
GLAMs can be exploited for computing the maximum likelihood estimate
and other quantities of importance for statistical inference. The
penalized maximum likelihood estimate for a quadratic penalty
function can also be computed easily by similar methods.    
The computations are simple to implement efficiently in any high level
language like \verb+R+ or \verb+MATLAB+ that supports fast numerical
linear algebra routines. They exploit the GLAM structure to carry out
linear algebra operations involving only the tensor factors -- called array
arithmetic, see also \cite{deboor1979} and \cite{buis1996} -- and they
avoid forming the design matrix. This design matrix free approach offers benefits in terms of memory as well
as time usage compared to standard GLM computations. 

The approach in \cite{currie2006} has some limitations when the
dimension $p$ of the parameter vector becomes large. The $p \times p$ weighted 
cross-product of the design matrix has to be computed, and though this
computation can benefit from the GLAM structure, a linear equation in
the parameter vector remains to be solved. The computations can become prohibitive
for large $p$. Moreover, the approach does not readily generalize to
non-quadratic penalty functions like the $\ell_1$-penalty  or for that matter non-convex penalty functions like the smoothly clipped absolute deviation (SCAD)  penalty.

In this paper we investigate the computation of the penalized maximum
likelihood estimate in GLAMs for a general convex penalty
function. However, we note that by employing the multi-step
  adaptive lasso (MSA-lasso) algorithm from Sections 2.8.5 and 2.8.6
  in \cite{buhlmann2011} our algorithm can easily be extended to
  handle non-convex penalty functions. This modification is already
  implemented in the R-package \texttt{glamlasso} for the
  SCAD-penalty, see \cite{lund2016}. The convergence results presented
  in this paper are, however, only valid for a convex penalty.

 Algorithms considered in the literature hitherto for
$\ell_1$-penalized estimation in GLMs, see e.g. \cite{friedman2010}, cannot easily benefit from the GLAM structure, and typically they need the
design matrix explicitly or at least direct access to its columns. Our proposed algorithm based on proximal
operators is design matrix free -- in the sense that the tensor
product design matrix need not be computed -- and can exploit the GLAM structure,
which results in an algorithm that is both memory and time efficient. 

The paper is organized as follows. In Section \ref{sec:glam} GLAMs are introduced. In Section
\ref{sec:alg} our proposed GD-PG algorithm for computing the
penalized maximum likelihood estimate  is described. Section \ref{sec:multdimsmooth}
presents two multivariate smoothing examples where the algorithm is
used to fit GLAMs. This section includes a benchmark comparison between our
implementation of the GD-PG algorithm in the R package
\verb+glamlasso+ and the algorithm implemented in \verb+glmnet+. Section \ref{sec:conv} presents a
convergence analysis of the proposed algorithm. In Section
\ref{sec:glamops} a number of details on how the algorithm is
implemented in \verb+glamlasso+ are collected. This includes details
on how the GLAM structure is exploited, and the section also presents
further benchmark results. Section
\ref{sec:disc} concludes the paper with a discussion. Some technical
and auxiliary definitions and results are presented in two appendices.  

\section{Generalized linear array models}\label{sec:glam}
A generalized linear model (GLM) is a regression model of $n$
independent real valued random variables $\mathcal{Y}_1, \ldots,
\mathcal{Y}_n$, see \cite{nelder1972}. A generalized linear array
model (GLAM) is a GLM with some additional structure of the data. We first introduce GLMs and then the special data structure for GLAMs.

With $X$ an $n\times p$ design matrix, the linear predictor $\eta :
\mathbb{R}^p \to \mathbb{R}^n$ is defined as
\begin{alignat}{4}
\eta(\theta) \coloneqq X\theta
\end{alignat}
for $\theta \in \mathbb{R}^p$. With $g:\RR\to\RR$ denoting the link
function, the mean value of $\mathcal{Y}_i$ is given in terms of
$\eta_i(\theta)$ via the equation 
\begin{alignat}{4}\label{one}
g(\Exp(\mathcal{Y}_i)) = \eta_i(\theta).
\end{alignat}
The link function $g$ is throughout assumed invertible with a continuously differentiable inverse.

The distribution of $\mathcal{Y}_i$ is, furthermore, assumed to belong
to an exponential family, see Appendix \ref{sec:expfam}, which implies that the log-likelihood, $\theta\mapsto l(\eta(\theta))$, is given in terms of the linear predictor. With $y=(y_1,\ldots,y_n)^\top \in \mathbb{R}^n$ denoting a vector of realized observations of the variables $\mathcal{Y}_i$, the log-likelihood (with weights $a_i \geq 0$ for $i = 1,\ldots, n$) and its gradient are given as
\begin{alignat}{4}
 l(\eta(\theta))&= \sum_{i=1} ^n  a_i(y_i  \vartheta(\eta_i(\theta))-b(\vartheta(\eta_i(\theta)))) \quad \textnormal{and} \label{two}
 \\
 \nabla_\theta l(\eta(\theta))&=X^\top  u(\eta(\theta))\label{three},
\end{alignat}
respectively,  where $\vartheta : \mathbb{R} \to \mathbb{R}$ denotes
the canonical parameter function, and $u(\eta) \coloneqq\nabla_\eta
l(\eta) $ is the score statistic, see Appendix \ref{sec:expfam}.

 The main problem considered in this paper is the computation of the penalized maximum likelihood estimate (PMLE),
\begin{alignat}{4}\label{five}
  \theta^\ast\coloneqq\argmin_{\theta\in \RR^p}
  -l(\eta(\theta))+\lambda J(\theta), 
\end{alignat}
where $J:\RR^p \to (-\infty, \infty]$ is a proper, convex and closed
penalty function, and $\lambda \geq 0$ is a regularization parameter
controlling the amount of penalization. Note that $J$ is allowed to
take the value $\infty$, which can be used to enforce convex parameter
constraints. The objective function of this minimization problem is thus the penalized negative log-likelihood, denoted
\begin{alignat}{4}
F\coloneqq-l+\lambda J,
\end{alignat}
where $-l$ is continuously differentiable. 

For a GLAM the vector $y$ is assumed given as $y=\ve(Y)$ (the  $\ve$
operator is discussed in Appendix \ref{app:rho}), where 
$Y$ is an $n_1\times \cdots\times n_d$ $d$-dimensional array. The design matrix $X$ is assumed to be a 
concatenation of $c$ matrices 
\begin{alignat*}{4}
  X=[X_1\vert X_2\vert \ldots\vert X_c],
\end{alignat*}
where the $r$th component is a tensor product, 
\begin{alignat}{4}\label{fivenew}
 X_r=X_{r, d} \otimes X_{r,d - 1} \otimes \cdots \otimes X_{r,1},
  \end{alignat}
of $d$ matrices. The matrix $X_{r,j}$ is an $n_{j}\times p_{r,j}$ matrix, such that
  \begin{alignat*}{4}
n = \prod_{j=1}^{d} n_{j}, \quad p_r \coloneqq \prod_{j=1}^{d} p_{r,j}, \quad p = \sum_{r=1}^c p_r.  
\end{alignat*}
We let $\langle X_{r,j}\rangle\coloneqq \langle X_{1,1},\ldots ,X_{c,d} \rangle$ denote the tuple of marginal design matrices. 

The assumed data structure induces a corresponding structure on the parameter vector, $\theta$, as a concatenation of $c$  vectors,
\begin{alignat*}{4}
\theta^\top=(\ve(\Theta_1)^\top, \ldots,    \ve(\Theta_c)  ^\top ),
\end{alignat*}
with  $\Theta_r$ a $p_{r,1}\times\cdots\times p_{r,d}$  $d$-dimensional array. We let $\langle\Theta_{r}\rangle\coloneqq\langle\Theta_{1},\ldots,\Theta_{c}\rangle$ denote the tuple of parameter arrays.

Given this structure it is possible to define a map, $\rho$, such that
with $\theta_r = \ve(\Theta_r)$,
\begin{alignat}{4}\label{seventeen}
  X_r\theta_r  = \ve \big( \rho(X_{r,d},\ldots,
  \rho(X_{r,2},(\rho(X_{r,1},\Theta_r)))\ldots) \big)
\end{alignat}
for $r = 1, \ldots, c$. The algebraic details of $\rho$ are spelled
out in Appendix \ref{app:rho}.

As a consequence of the array structure, the linear predictor can be
computed using $\rho$ without explicitly constructing $X$. The most
obvious benefit is that no large tensor product matrix needs to be
computed and stored. In addition, the array structure can be
beneficial in terms of time complexity. As noted in \cite{buis1996},
with $X_{r,j}$ being a square $n_r \times n_r$ matrix, say, the
computation of the direct matrix-vector product in \eqref{seventeen} has $O(n_r^{2d})$ time
complexity, while the corresponding array computation has
$O(dn_r^{d+1})$ time complexity. This reduced time complexity for $d
\geq 2$ translates, as mentioned in the introduction, directly into
a computational advantage for computing the PMLE with a quadratic
penalty function, see  \cite{currie2006}. For non-quadratic penalty functions
the translation is less obvious, but we present one algorithm that is
capable of benefitting from the array structure.

\section{Penalized estimation  in a GLAM} \label{sec:alg} 

 In most situations the PMLE must be computed by an iterative algorithm.  We present an algorithm that solves the optimization problem \eqref{five}  by iteratively optimizing a partial quadratic approximation to the objective function while exploiting the array structure. The proposed algorithm is a combination of a gradient based descent (GD) algorithm with a proximal gradient (PG) algorithm. The resulting algorithm, which we call GD-PG, thus consists of the following two parts:
\begin{itemize}
\item an outer GLAM enhanced GD loop
\item an inner GLAM enhanced PG loop.
\end{itemize}
We present these two loops in the sections below postponing the details on how the array structure can be exploited to Section \ref{sec:glamops}, where it is explained in detail how the two loops can be enhanced for GLAMs. 

\subsection{The outer GD loop}\label{subsubsec:ipwls}
The outer loop consists of a sequence of descent steps based on a
partial quadratic approximation of the objective function. This
results in a sequence of estimates, each of which is defined in terms
of a penalized weighted least squares estimate and whose computation
involves an iterative choice of weights. The weights can be chosen so that the
inner loop can better exploit the array structure.

  For $k\in \NN$ and $\theta^{(k)}\in \RR^p$ let $\eta^{(k)} = \eta(\theta^{(k)})$ and $u^{(k)} = \nabla_\eta l(\eta^{(k)})$, let $W^{(k)}$ denote  a  positive  definite    diagonal $n\times n$  weight matrix and let 
$z^{(k)}$ denote the  $n$-dimensional vector (the working response) given by 
\begin{alignat}{4}\label{ten}
z^{(k)}\coloneqq(W^{(k)})^{-1}u^{(k)} + \eta^{(k)}.
\end{alignat}
 The sequence $(\theta^{(k)})$ is defined recursively from an initial $\theta^{(0)}$ as follows. Given $\theta^{(k)}$ let
\begin{alignat}{4}\label{eleven}
\tilde{\theta}^{(k+1)} \coloneqq \argmin_{\theta\in \mathbb{R}^p}  \frac{1}{2n}\Vert \sqrt{W^{(k)}}(X\theta-z^{(k)})\Vert^2_2+\lambda  J(\theta)
\end{alignat}
denote the penalized weighted least squares estimate and define 
\begin{alignat}{4}
\theta^{(k+1)} \coloneqq \theta^{(k)} + \alpha_k (\tilde{\theta}^{(k+1)} - \theta^{(k)}),
\end{alignat}
where the stepsize $\alpha_k > 0$ is determined to ensure sufficient
descent of the objective function, e.g. by using the Armijo rule. A
detailed convergence analysis is given in Section \ref{sec:conv},
where the relation to the class of gradient based descent
algorithms in \cite{tseng2009} is established.

\subsection{The inner PG loop}

The inner loop solves \eqref{eleven} by a proximal gradient algorithm. To formulate the algorithm consider 
a generic version of \eqref{eleven} given by
\begin{alignat}{4}\label{twentytwo}
x^\ast \coloneqq \argmin_{x\in \RR^p} h(x) + \lambda J(x),
\end{alignat}
where $h :\RR^p \to\RR$ is convex and continuously differentiable. It is assumed that there exists a minimizer $x^\ast$. Define for $\gamma > 0$ the proximal operator, $\mathrm{prox}_{\gamma}: \mathbb{R}^p \to \mathbb{R}^p$, by 
$$\mathrm{prox}_{\gamma}(z) = \argmin_{x \in \mathbb{R}^p}  \Big\{\frac{1}{2}\Vert x-z \Vert_{2}^2 + \gamma J(x)\Big\}.$$
The proximal operator is particularly easy to compute for a separable penalty function like the 1-norm or the squared 2-norm. 
Given a stepsize $\delta_k > 0$, initial values $x^{(0)} = x^{(1)} \in \mathbb{R}^p$ and an extrapolation sequence $(\omega_l)$ with $\omega_l \in [0,1)$ define the sequence $(x^{(l)})$ recursively by
\begin{alignat}{4}
 y & \coloneqq  x^{(l)} + \omega_{l} \Big(x^{(l)} - x^{(l-1)}\Big) \quad \textrm{and} \\
 x^{(l+1)} & \coloneqq \mathrm{prox}_{\delta_k \lambda}(y - \delta_k \nabla h(y)). 
\end{alignat}
The choice of $\omega_l = 0$ for all $l \in \mathbb{N}$ gives the
classical proximal gradient algorithm, see \cite{Parikh2014}. Other
choices of the extrapolation sequence, e.g. $\omega_l = (l - 1) / (l + 2)$,
can accelerate the convergence. Convergence results can be established if $\nabla h$ is Lipschitz continuous and $\delta_k$ is chosen sufficiently small -- see Section \ref{sec:conv} for further details.

For the convex function 
\begin{alignat}{4}\label{twentyfour}
 h(\theta) &\coloneqq\frac{1}{2n} \Vert  \sqrt{W^{(k)}}(X\theta- z^{(k)} )\Vert^2_2
\end{alignat}
we have that 
   \begin{alignat}{4}\label{twentyfive}
 \nabla h(\theta) =   \frac{1}{n}X^\top W^{(k)}( X\theta-z^{(k)} ).
\end{alignat}
This shows that $\nabla h(\theta)$ is Lipschitz continuous, and its explicit form in \eqref{twentyfive} indicates how the array structure can be exploited -- see also Section \ref{sec:glamops}.

\subsection{The GD-PG algorithm}\label{subsec:iwlsfpga}

The combined GD-PG algorithm is outlined as Algorithm \ref{alg:cgdfpg}
below. It is formulated using array versions of the model
components. Especially, $U^{(k)}$ and $Z^{(k)}$ denote
$n_1\times \cdots\times n_d$ array versions of the score statistic,
$u^{(k)}$, and the working response, $z^{(k)}$, respectively. Also $V^{(k)}$ is an $n_1\times \cdots\times n_d$ array
containing the diagonal of the $n\times n$ weight matrix
$W^{(k)}$. The details on how the steps in the algorithm can exploit
the array structure are given in Section \ref{sec:glamops}.
 
\begin{algorithm}[H] 
 \caption{GD-PG}
 \label{alg:cgdfpg}
\begin{algorithmic}[1]
\REQUIRE $\langle \Theta_r^{(0)} \rangle$, $\langle X_{r,j} \rangle$
 \FOR{$k=0$ to $K\in \NN$}
\STATE given $ \langle\Theta_r^{(k)}\rangle$: compute $U^{(k)}$, specify $V^{(k)}$  and compute $Z^{(k)}$
\STATE specify the proximal stepsize $\delta_k$
\STATE given  $ \langle\Theta_r^{(k)}\rangle, V^{(k)}, Z^{(k)}, \delta_k$:  compute  $\langle \tilde\Theta_r^{(k+1)}\rangle$ by the inner PG loop
\STATE given $ \langle\Theta_r^{(k)}\rangle, \langle\tilde\Theta_r^{(k+1)}\rangle$: use a line search to compute $\langle\Theta_r^{(k+1)}\rangle$
\IF{convergence criterion is satisfied} 
\STATE break
\ENDIF
\ENDFOR
\end{algorithmic}
\end{algorithm}

The outline of Algorithm \ref{alg:cgdfpg} leaves out some details that
are required for an implementation. In step 2 the weights must be
specified. In Section \ref{sec:conv} we present results on convergence
of the outer loop, which put some restrictions on the choice of
weights. In step 3 the proximal gradient stepsize must
be specified. In Section \ref{sec:conv} we give a computable upper
bound on the stepsize that ensures convergence of the inner PG
loop. Convergence with the same convergence rate can also be ensured
for larger stepsizes if a backtracking step is added to the inner PG
loop. In step 4, $\langle\Theta_r^{(k)}\rangle$ is a natural choice of
initial value in the inner PG loop, but this choice is not necessary
to ensure convergence. In step 4 it is, in addition, necessary to
specify the extrapolation sequence. Finally, in step 5 a line search
is required. In Section \ref{sec:conv} convergence of the
outer loop is treated when the Armijo rule is used.

\section{Applications to multidimensional smoothing}\label{sec:multdimsmooth}

As a main application of the GD-PG algorithm we consider multidimensional smoothing, which can be formulated in the framework of GLAMs by using a basis expansion with tensor product basis functions. We present the framework below and report the results obtained for two real data sets. 

\subsection{A generalized linear array model for smoothing}\label{subsec:gam}

Letting $\X_{1},\ldots,\X_d \subseteq \mathbb{R}$ denote $d$ finite sets define the $d$-dimensional  grid
\begin{alignat*}{4}
\mathcal{G}_d\coloneqq  \X_1\times\cdots\times\X_d.
\end{alignat*}
The set $\X_j$ is the  set of (marginal) grid points in the $j$th
dimension and $n_j\coloneqq\vert\X_{j}\vert $  denotes the number of
such marginal points in the $j$th dimension. We have a total of
$n\coloneqq\prod_{j = 1}^d n_j$ $d$-dimensional joint grid points,  or  $d$-tuples,
\begin{alignat*}{4}
(x_1,\ldots,x_d) \in \mathcal{G}_d.
\end{alignat*}

For each of the $n$ grid points  we observe a corresponding  grid
value $y_{x_1,\ldots,x_d}\in \RR$ assumed to be a realization of a
real valued random variable $\mathcal{Y}_{x_1,\ldots,x_d}$ with finite
mean. That is, the observations can be regarded as a $d$-dimensional
array $Y$. With $g:\RR\to\RR$ a link function let
\begin{alignat}{4}\label{thirtyfour}
f(x_1,\ldots,x_d) \coloneqq g(\Exp(\mathcal{Y}_{x_1,\ldots,x_d})),\quad  (x_1,\ldots,x_d)\in \G_d.
\end{alignat}
The objective is to estimate $f$, which is assumed to possess some form of regularity as a function of $(x_1,\ldots,x_d)$. Assuming that $f$ belongs to the span of $p$ basis functions, $\phi_1, \ldots, \phi_p$, it holds that
 \begin{alignat*}{4}
f(x_1,\ldots,x_d)=\sum_{m=1}^{p} \beta_m\phi_m(x_1,\ldots,x_d), \quad  (x_1,\ldots,x_d)\in \G_d,
\end{alignat*}
for $\beta \in \mathbb{R}^p$. If the basis function evaluations are
collected into an $n\times p$ matrix
$\Phi\coloneqq(\phi_m((x_1,\ldots,x_d)_i))_{i,m}$, and if the entries
in the array $Y$ are realizations of independent random variables from
an exponential family as described in Appendix \ref{sec:expfam}, the resulting model is a GLM with design matrix $\Phi$ and regression coefficients $\beta$.

For $d\geq 2$ the $d$-variate basis functions can be specified via a tensor product construction in terms of $d$ (marginal) sets of univariate functions 
 by 
\begin{alignat}{4}
 \phi_{m_1,\ldots,m_d} \coloneqq   \phi_{1,m_1} \otimes \phi_{2,m_2} \otimes \cdots \otimes \phi_{d,m_d},  
\end{alignat}
where $\phi_{j,m} : \mathbb{R} \to \mathbb{R}$ for $j = 1, \ldots, d$ and $m = 1, \ldots, p_j$. 
The evaluation of each of the $ p_j$ univariate functions in the $n_j$
points in $\X_j$ results in an $n_j\times p_j$ matrix  $\Phi_j =
(\phi_{j, m}(x_k))_{k,m}$. It then follows that the $n\times p$  ($p\coloneqq\prod_{j=1}^d p_j$) tensor product matrix 
\begin{alignat}{4}
  \Phi = \Phi_{d}\otimes\cdots\otimes \Phi_{1}
\end{alignat} 
is identical to the design matrix for the basis evaluation in the tensor product basis, and the GLM has the structure required of a GLAM. 
 
\subsection{Benchmarking on real data}\label{subsec:data}

The multidimensional smoothing model described in the previous section was fitted using an $\ell_1$-penalized B-spline basis expansion to two real data sets using the GD-PG algorithm as implemented in the R package \verb+glamlasso+. See Section \ref{sec:glamlasso} for details about the R package. In this section we report benchmark results for \verb+glamlasso+ and the coordinate descent based implementation in the R package \verb+glmnet+, see \cite{friedman2010}.

For both data sets we fitted a sequence of models to data from an
increasing subset of grid points, which correspond to a sequence of
design matrices of increasing size. For each design matrix we fitted
100 models for a decreasing sequence of values of the penalty
parameter $\lambda$. We report the run time for fitting the sequence
of 100 models using \verb+glamlasso+ and \verb+glmnet+. We also report
the run time for the combined computation of the tensor product design
matrix and the fit using \verb+glmnet+. The latter is more relevant
for a direct comparison with \verb+glamlasso+, since \verb+glamlasso+
requires only the marginal design matrices while \verb+glmnet+
requires the full tensor product design matrix.

To justify the comparison we report the relative deviation of the objective function values attained by \verb+glamlasso+ from the objective function values attained by \verb+glmnet+, that is,
\begin{alignat}{4}\label{thirtythree}
\frac{F(\hat{\theta}^{\texttt{glamlasso}})-F(\hat{\theta}^{\texttt{glmnet}})}{\vert F(\hat{\theta}^{\texttt{glmnet}})\vert }
\end{alignat}
with  $\hat{\theta}^{\texttt{x}}$ denoting the estimate computed by method \verb+x+. This ratio is computed for each fitted model. We note that \eqref{thirtythree} has a tendency to blow up in absolute value when $F$ becomes small, which happens for small values of $\lambda$. 

The benchmark computations were carried out on a  Macbook Pro   with a  2.8 GHz Intel core i7 processor and 16 GB  of 1600 MHz DDR3  memory.   Scripts and data are included as supplemental materials online.

\subsubsection{Gaussian neuron data}\label{subsubsec:neuro}
The first data set considered consists of spatio-temporal voltage sensitive dye  recordings of a ferret brain provided by Professor Per Ebbe Roland, see \cite{roland2006}. The data set consists of images of size $25\times 25 $ pixels recorded with a time resolution of 0.6136 ms per image. The images were recorded over 600 ms, hence  the total size of this 3-dimensional array data set is  $25\times25\times 977$ corresponding to $n = 610,625$ data points. 

As basis functions we used cubic B-splines with
$p_j\coloneqq\max\{[n_j/5],5\}$  basis functions in each dimension
(see \cite{currie2006} or \cite{wood2006}). This corresponds to a
parameter array of size  $5\times 5\times 196$ ($p=4,900$) and a
design matrix of size $610,625\times 4,900$ for the entire data
set. The byte size for representing this design matrix as a dense matrix was approximately 22 GB. For
the benchmark we fitted Gaussian models with the identity link
function to the full data set as well as to subsets of the data set
that correspond to smaller design matrices. 

Figure \ref{fig:neuronex} shows an example of the raw data and the smoothed fit for a particular time point. Movies of the raw data and the smoothed fit can be found as supplementary material. 

Run times and relative deviations are shown in Figure
\ref{fig:runneuron}. The model could not be fitted using \verb+glmnet+
to the full data set due to the large size of the design matrix, and
results for \verb+glmnet+ are thus only reported for models that could
be fitted. The run times for \verb+glamlasso+ were generally smaller
than for \verb+glmnet+, and were, in particular, relatively
insensitive to the size of the design matrix. When a sparse matrix
representation of the design matrix was used, \verb+glmnet+ was able
to scale to larger design matrices, but it was still clearly outperformed by
\verb+glamlasso+ in terms of run time. The relative deviations
in the attained objective function values were quite small.

\begin{figure}[t]
\begin{center}
\includegraphics[scale=0.35]{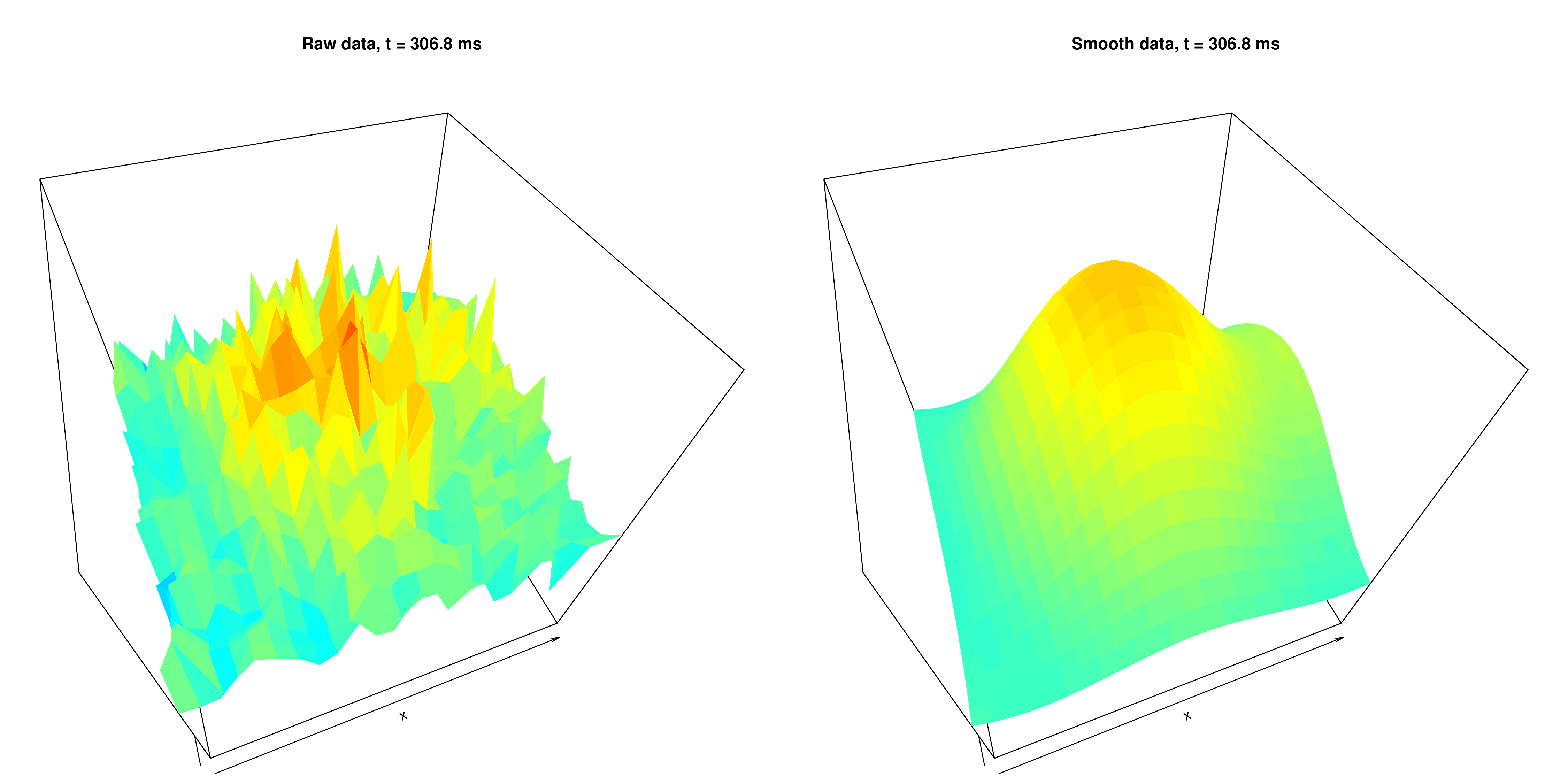}
\caption{The raw neuron data (left) and the smoothed fit (right) after 306.8 ms. The supplementary material contains movies of the complete raw data and smoothed fit.}
\label{fig:neuronex}
\end{center}
\end{figure}

\begin{figure}[t]
\begin{center}
\begin{tabular}{cc}
\includegraphics[scale = 0.45]{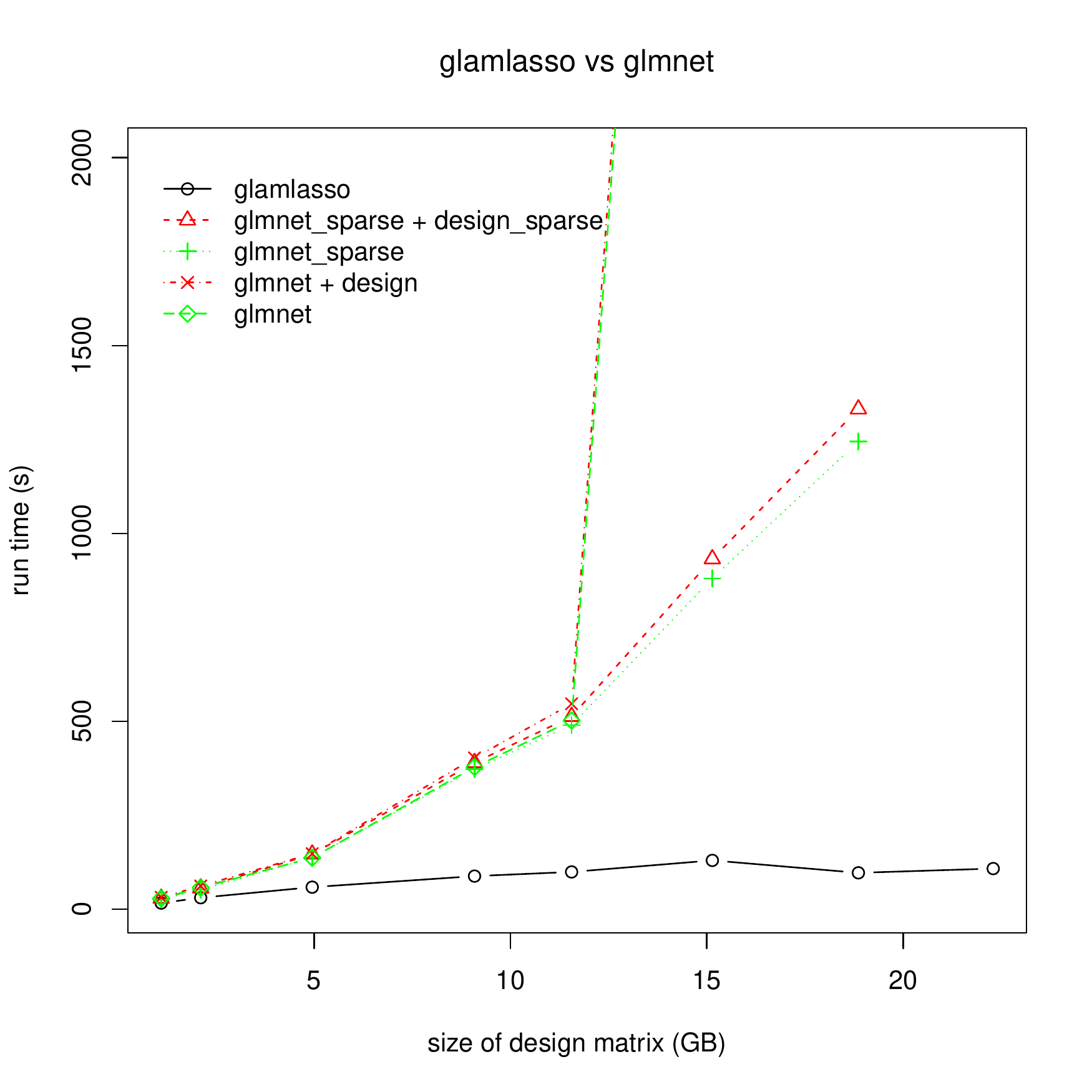}
\includegraphics[scale = 0.45]{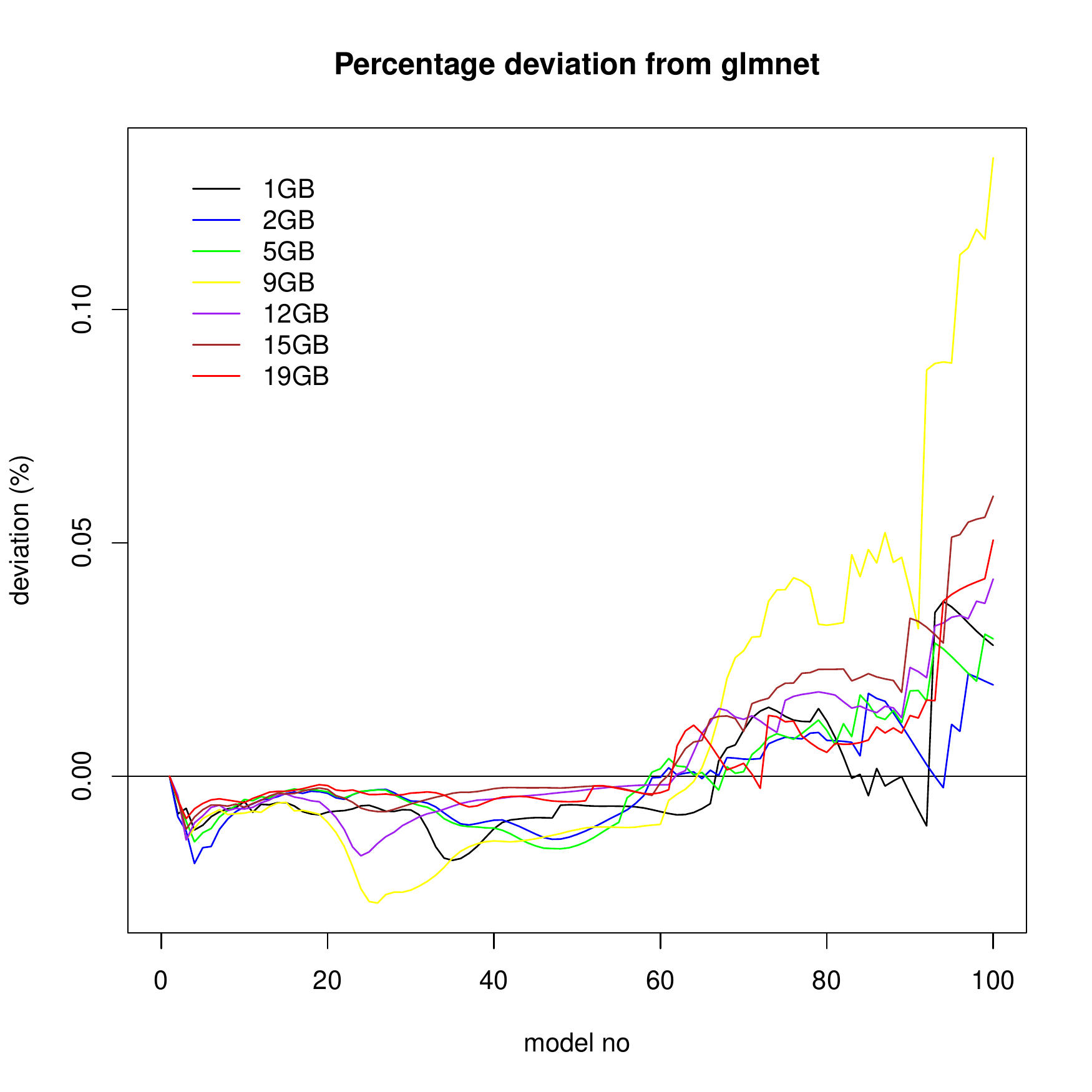} &
\end{tabular}
 \caption{Benchmark results for the neuron data. Run time in seconds is shown as a function of the size of the design matrix, when not stored in sparse format,  in GB (left). Relative deviation in the attained objective function values as given by \eqref{thirtythree} is shown as a function of model number (right), where a larger model number corresponds to less penalization (smaller $\lambda$).} 
\label{fig:runneuron}
\end{center}
\end{figure}

\begin{figure}[t]
\begin{center}
\includegraphics[scale = 0.49]{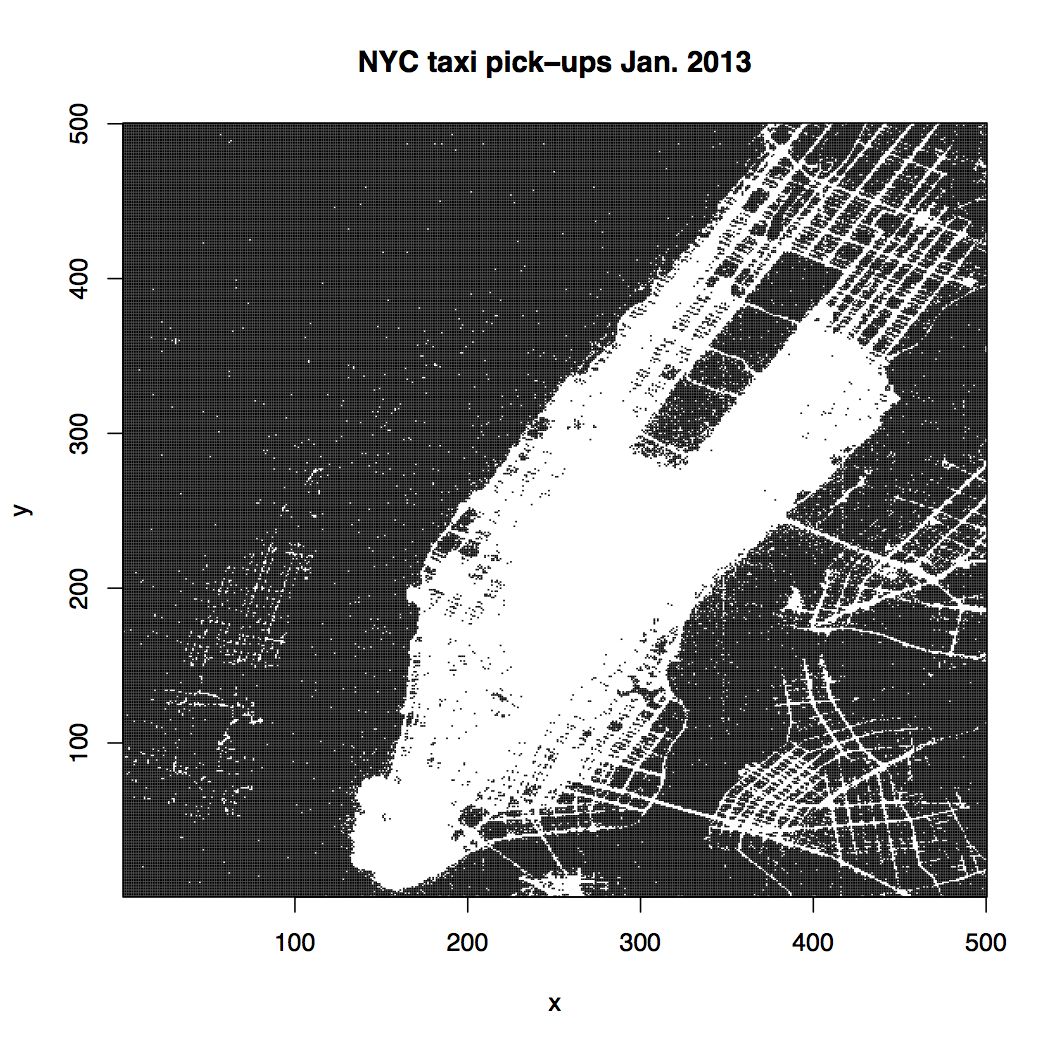}
\includegraphics[scale = 0.5]{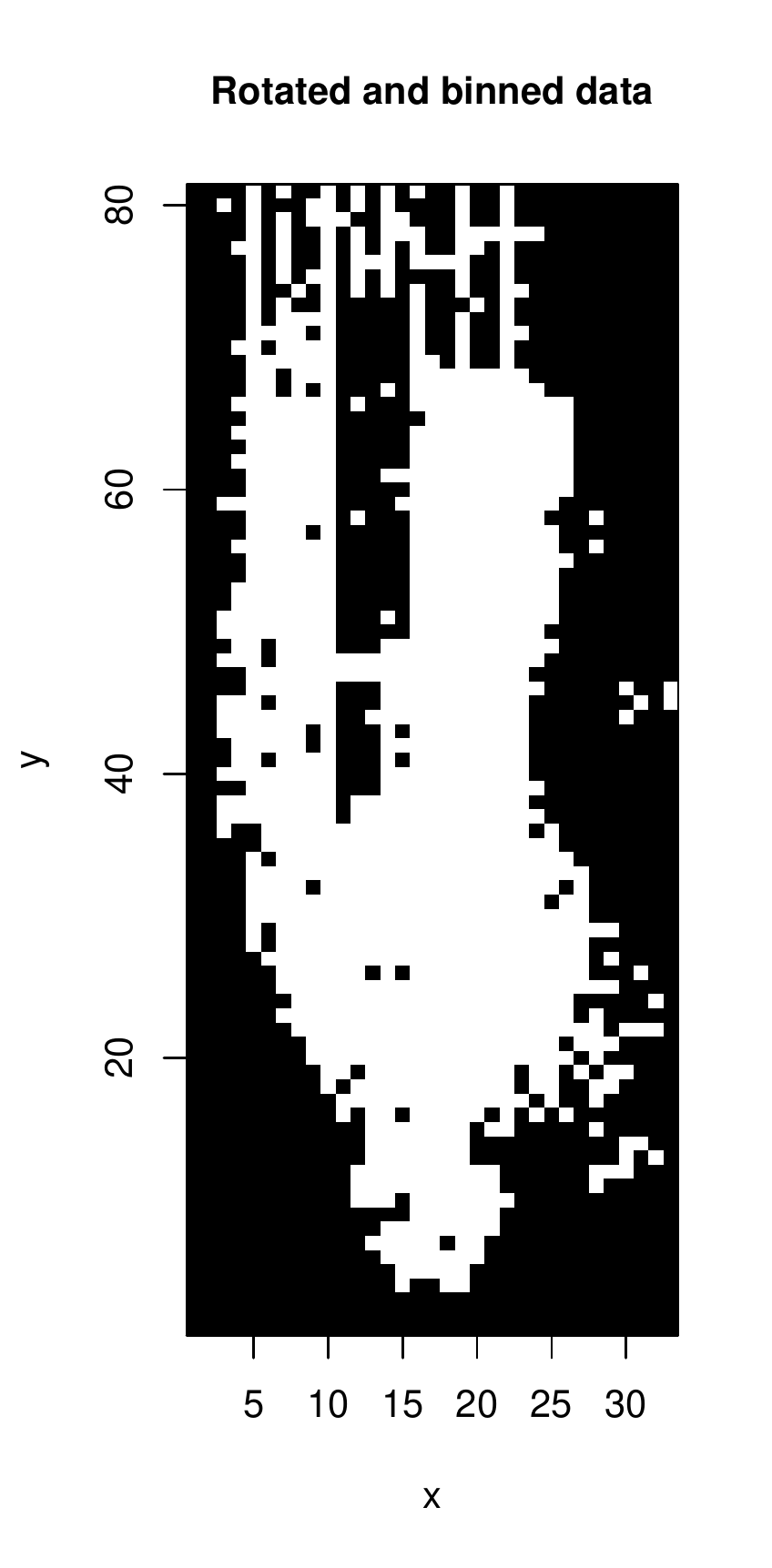}
\caption{Binned counts of registered NYC taxi pickups for January 2013 using $500 \times 500$ spatial bins (left) and the same data rotated, binned to $100 \times 100$ spatial bins and cropped to cover Manhattan only (right).}
\label{fig:five}
\end{center}
\end{figure}

\subsubsection{Poisson taxi data}\label{subsubsec:taxi}
 The second data set considered consists of spatio-temporal information on registered taxi pickups in New York City during January 2013. The data can be  download from the webpage \verb+www.andresmh.com/nyctaxitrips/+. We used a subset of this data set consisting of  triples containing  longitude, latitude and  date-time of the pickup.  First we cropped the data to pickups with longitude  in $[-74.05^\circ, -73.93^\circ]$ and latitude  in $[40.7^\circ,40.82^\circ]$.  Figure \ref{fig:five} shows the binned counts of all pickups during January 2013 with 500 bins in each spatial dimension. Pickups registered in  Hudson  or  East River were ascribed to noise in the GPS recordings.

For this example attention was restricted to Manhattan pickups during
the first week of January 2013. To this end the data was rotated and
summarized as binned counts in $100 \times 100 \times 168$
spatial-temporal bins. Each temporal bin represents one hour. The data
was then further cropped to cover Manhattan only, which removed the
large black parts -- as seen on Figure \ref{fig:five} -- where pickups were rare. The total size of the data set was $33\times 81\times 168$ corresponding to $n =  449,064 $ data points. The observation in each bin consisted of the integer number of pickups registered in that bin. 

 \begin{figure}[t]
\begin{center}
\includegraphics[scale=0.4]{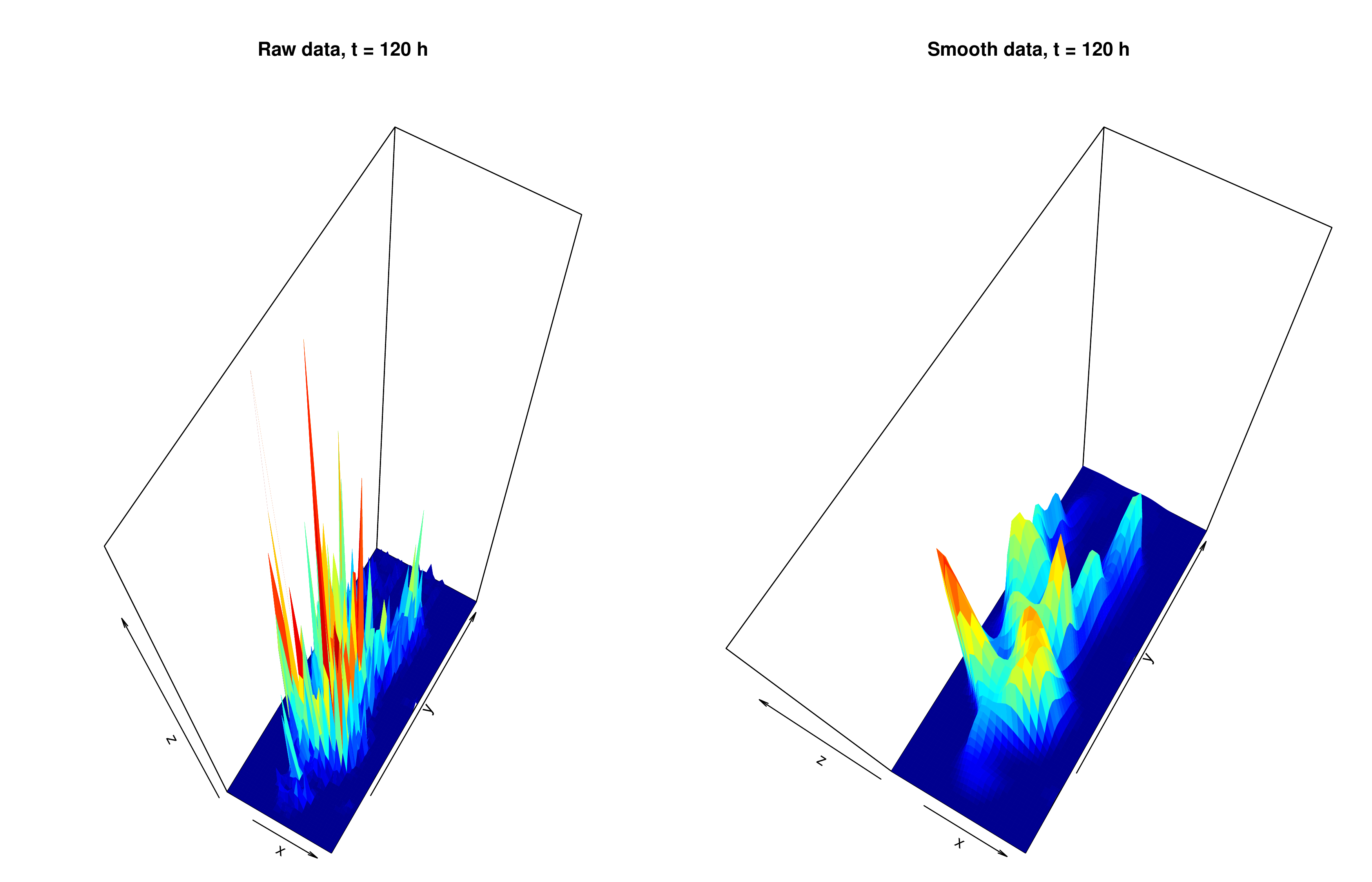}
\caption{The raw NYC taxi data (left) and the smoothed fit (right) around midnight on Saturday, January 5, 2013. The supplementary material contains movies of the complete raw data and smoothed fit.}
\label{fig:taxiex}
\end{center}
\end{figure}

\begin{figure}[h!]
\begin{center}
\begin{tabular}{cc}
\includegraphics[scale = 0.45]{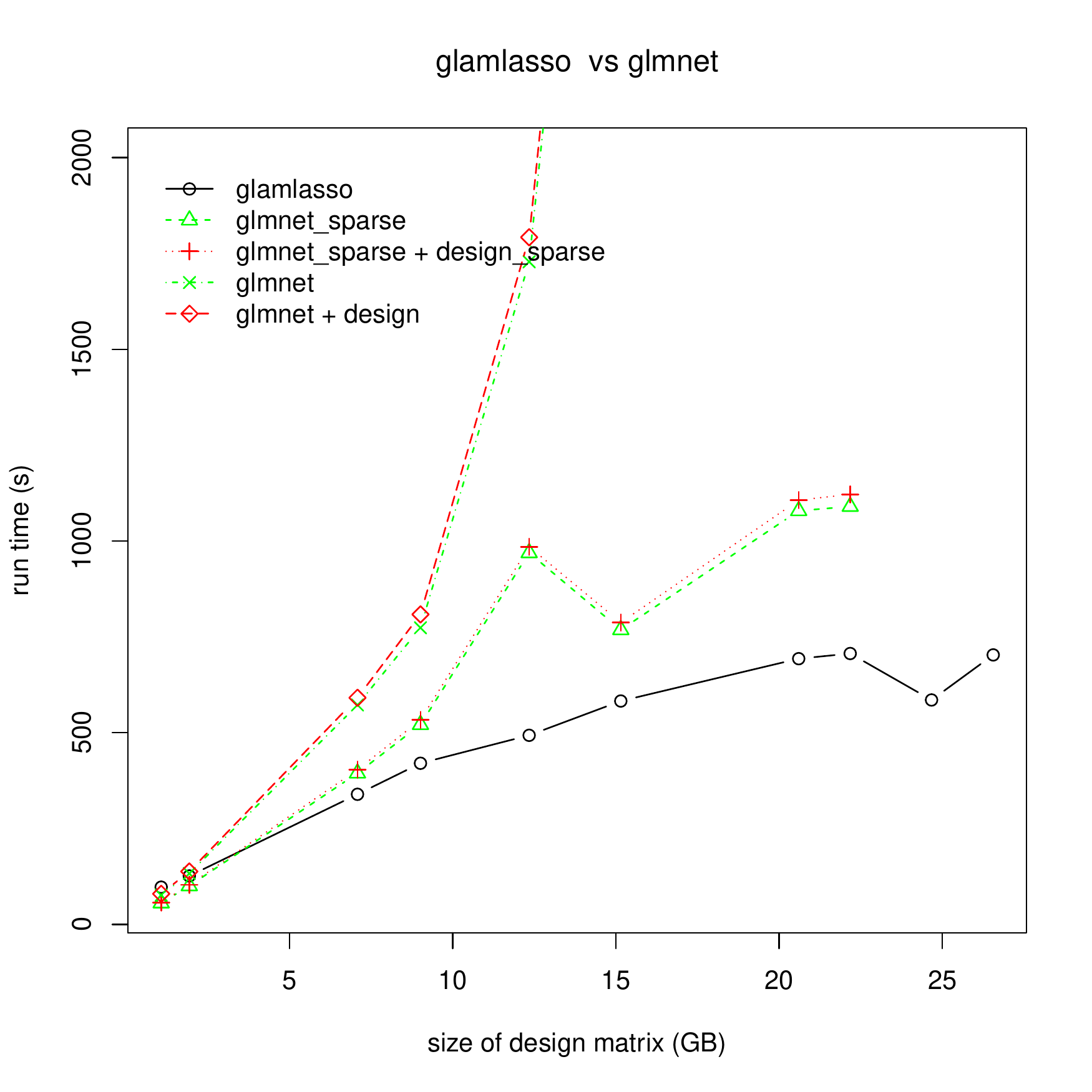} &
\includegraphics[scale = 0.45]{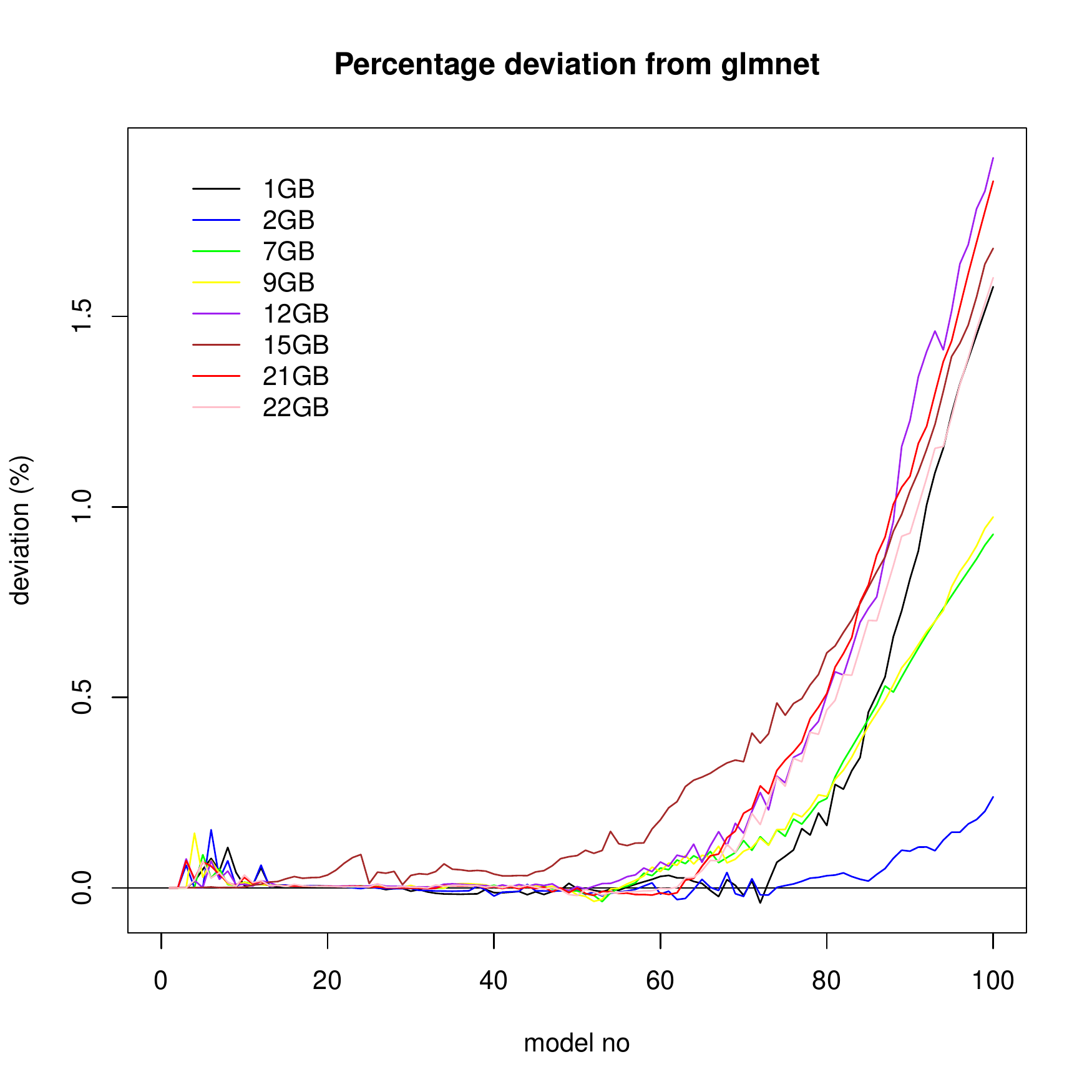}
\end{tabular}
\caption{Benchmark results for the taxi data. Run time in seconds is shown as a function of the size of the design matrix, when not stored in sparse format, in GB (left). Relative deviations in the attained objective function values as given by \eqref{thirtythree} is shown as a function of model number (right), where a larger model number corresponds to less penalization (smaller $\lambda$).}
\label{fig:runtaxi}
\end{center}
\end{figure}

We  used $p_j\coloneqq\max\{[n_j/4],5\}$ cubic B-spline basis functions in each dimension. The resulting parameter array was $9\times 21\times 42$ corresponding to  $p = 7,938$ and a design matrix of size  $449,064\times 7,938$ for the entire data set. The byte size for representing this design matrix as a dense matrix was approximately 27 GB.  For
the benchmark we fitted Poisson models with the log link function to the full data set as well as to subsets of the data set
that correspond to smaller design matrices.   

Figure \ref{fig:taxiex} shows an example of the raw data and the smoothed fit for around midnight on Saturday, January 5, 2013. Movies of the raw data and the smoothed fit can be found as supplementary material.  

Run times and relative deviations are shown in Figure
\ref{fig:runtaxi}. As for the neuron data, the model could not be
fitted to the full data set using \verb+glmnet+, and results for
\verb+glmnet+ are only reported for models that could be
fitted. Except for the smallest design matrix the run times for
\verb+glamlasso+ were smaller than for \verb+glmnet+, and they appear
to scale better with the size of the design matrix. This was
particularly so when the dense matrix representation was used
with \verb+glmnet+. The design matrix was very sparse in this example,
and \verb+glmnet+ benefitted considerable in terms of run time from
using a sparse storage format. The relative
deviations in the attained objective function values were still
acceptably small though the values attained by \verb+glamlasso+ were
up to 1.5\% larger than those attained by \verb+glmnet+ for the
least penalized models (models fitted with small values of $\lambda$).

\begin{figure}[t]
\begin{center}
\includegraphics[scale = 0.45]{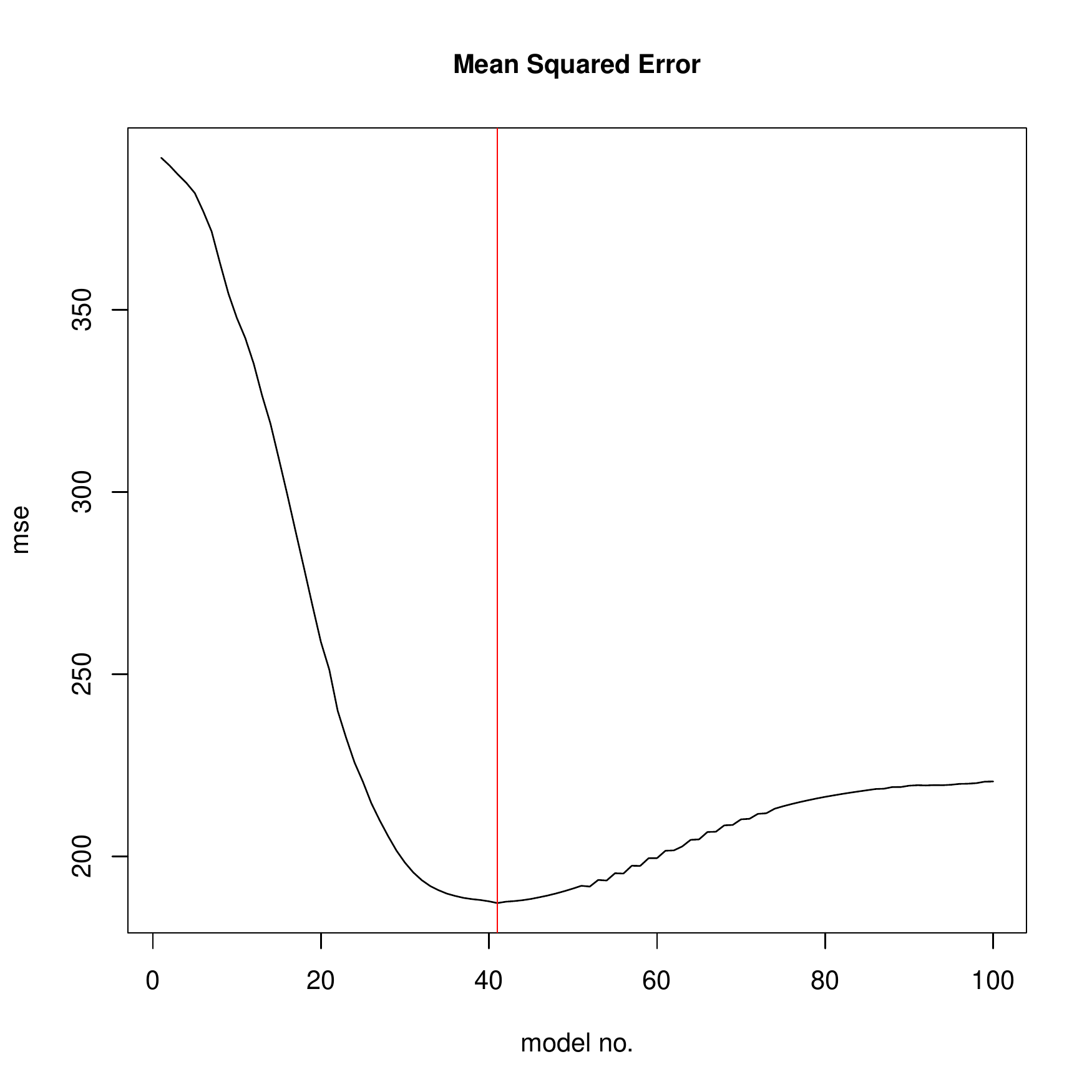}
\caption{The mean squared error for prediction on grid points left out of the model fitting as a function of model number. The vertical red line indicates the model with minimal MSE (model 41).}
\label{fig:eight}
\end{center}
\end{figure}

\subsection{Using incomplete array data}
\label{sec:incomplete}
The implementation in \verb+glamlasso+ allows for incompletely
observed arrays. This can, of course, be used for prediction of the
unobserved entries by computing the smoothed fit to the incompletely
observed array. In this section we show how it can also be used for selection of the tuning parameter $\lambda$.   We also refer to the  supplemental materials online for scripts and data.

\begin{figure}[t]
\begin{center}
\includegraphics[scale = 0.45]{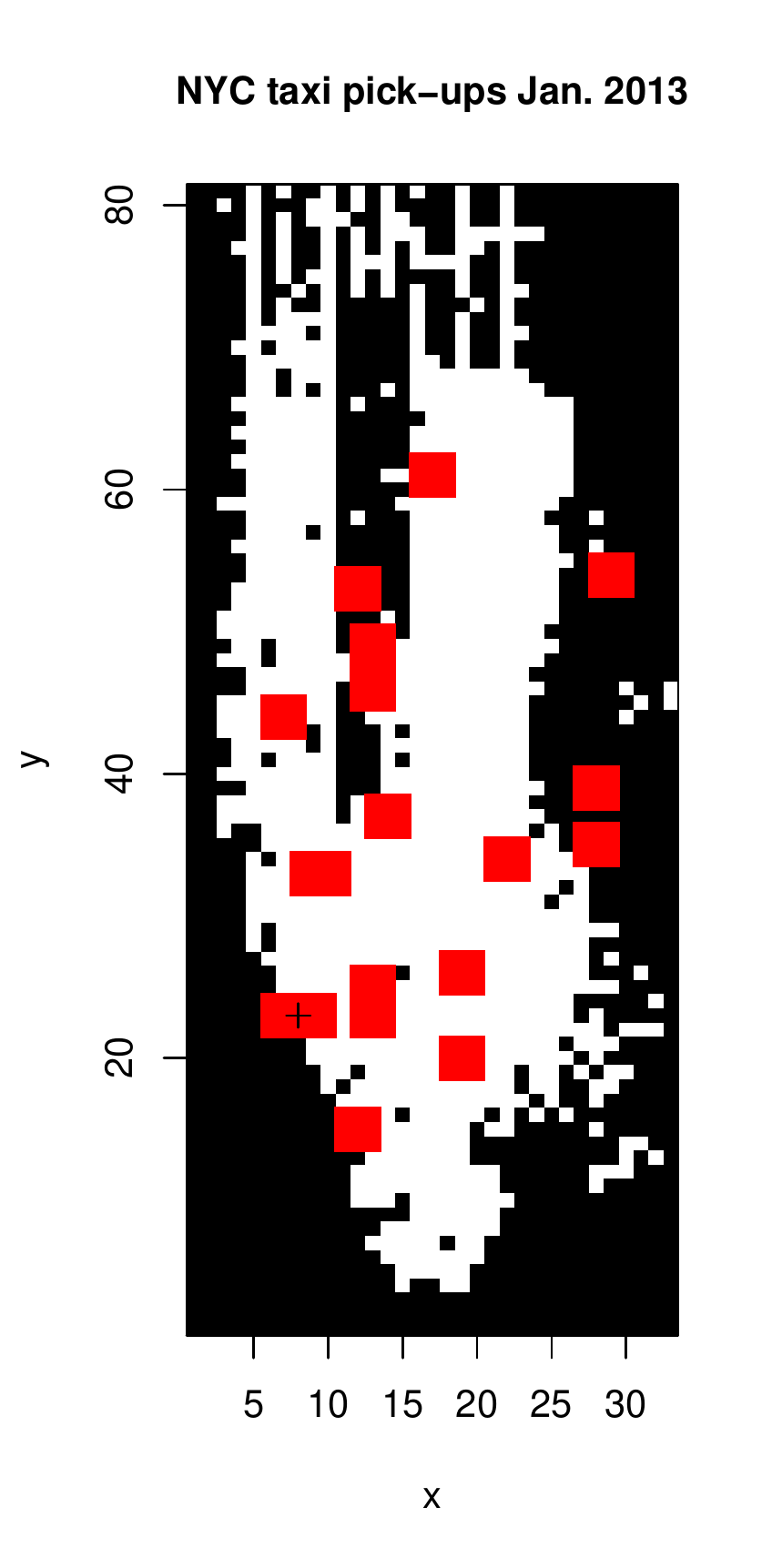}
\includegraphics[scale = 0.45]{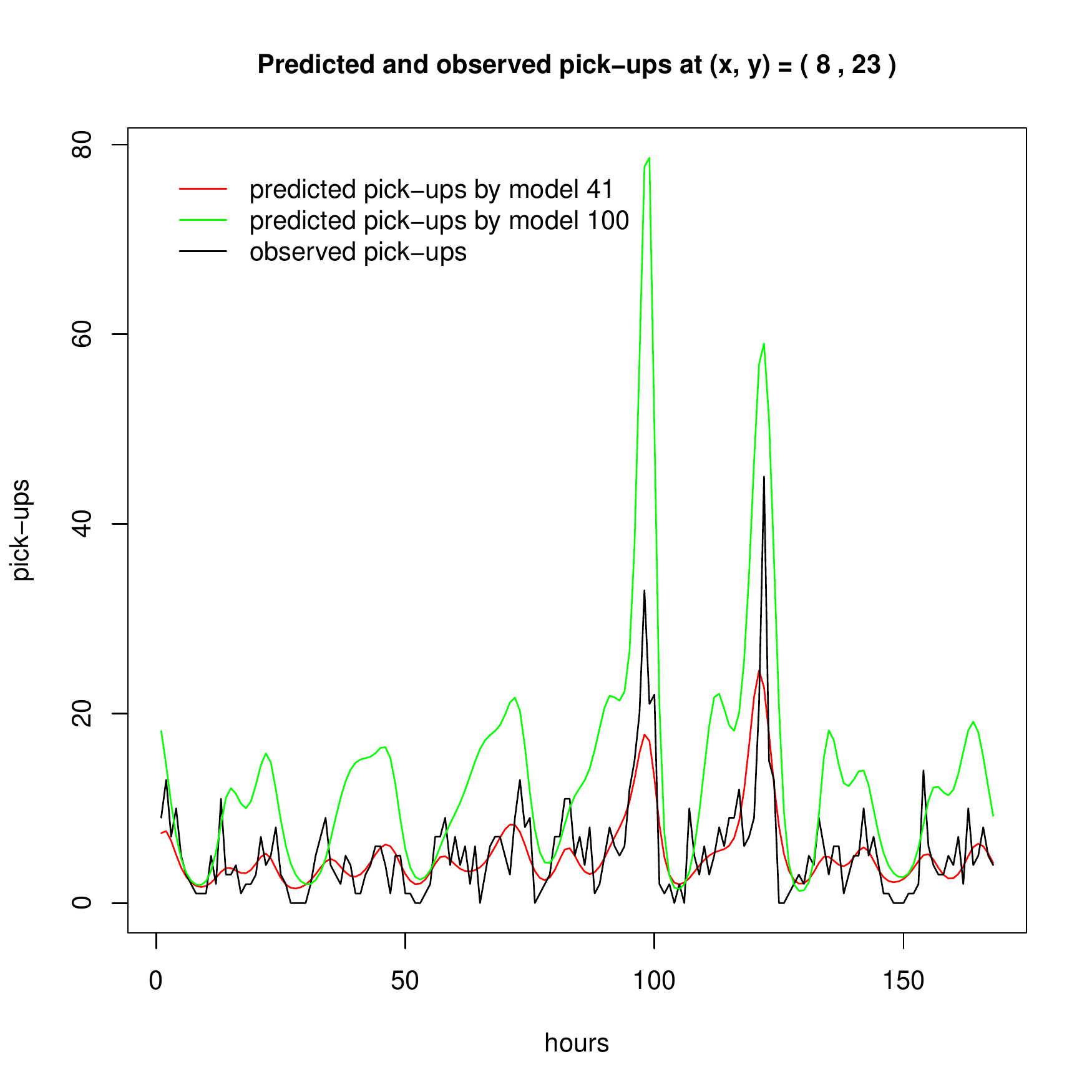}
\caption{Binned number of NYC taxi pickups as in Figure \ref{fig:five} (left) with red $3\times3$ squares indicating bins that were removed from the data before model fitting. Predicted and observed number of pickups at spatial bin $(8,23)$ (indicated with a ``+'' on the left figure) are shown as a function of time in hours (right). Model 100 predictions (green) were from the least penalized model while Model 41 predictions (red) were from the model with an overall minimal MSE.}
\label{fig:nine}
\end{center}
\end{figure}

We used the NYC taxi data and removed the observations for 19 randomly
chosen $3\times 3$ blocks of spatial bins (due to overlap of some of
the blocks this corresponded to 159 spatial bins). When fitting the model
using \verb+glamlasso+ the incompleteness is incorporated by setting
the weights corresponding to the missing values equal to zero for all
time points. We denote by $D$ the set of grid points that correspond
to the removed bins as illustrated by the red blocks in Figure
\ref{fig:nine}.

From \verb+glamlasso+ we computed a sequence of model fits corresponding to 100 values of $\lambda$, and for each value of $\lambda$ we computed the fitted complete array $\hat Y^{(\lambda)}$ and then the mean squared error (MSE), 
\begin{alignat*}{4}
\text{MSE}(\lambda)=\sum_{x \in D}(\hat Y^{(\lambda)}_x-Y_x)^2,
\end{alignat*}
as a function of $\lambda$, see Figure \ref{fig:eight}. Model 41 attained the overall minimal MSE. 

Figure \ref{fig:nine} shows predictions for one spatial bin. The under-smoothed Model 100 gives a poor prediction while the overall optimal Model 41 gives a much better prediction. 

\section{Convergence analysis}\label{sec:conv}

Our proposed GD-PG algorithm is composed of well known components, whose convergence properties have been extensively studied. We do, however, want to clarify under which conditions the algorithm can be shown to converge and in what sense it converges. The main result in this section is a computable upper bound of the step-size, $\delta_k$, in the inner PG loop that ensures convergence in this loop. This result hinges on the tensor product structure of the design matrix. 

We first state a theorem, which follows directly from \cite{Beck2010},
and which for a specific choice of extrapolation sequence gives the convergence rate for the inner PG loop for minimizing the objective function 
\begin{alignat}{4}
G \coloneqq  h + \lambda J,
\end{alignat}
where $h$ is given by \eqref{twentyfour}. In the following, $\Vert
A \Vert_{2}$ denotes the spectral norm of $A$, 
which is the largest singular value of $A$.
 
\begin{thm}\label{thm1} Let $x^\ast = \tilde{\theta}^{(k + 1)}$ denote
  the minimizer defined by \eqref{eleven} and let the extrapolation sequence for the
  inner PG loop be given by $\omega_l = (l - 1) / (l + 2)$. 
Let $(x^{(l)})$ denote the sequence obtained from the inner PG
loop. If $\delta^{(k)} \in (0, 1/L^{(k)}]$ where
\begin{alignat}{4} 
L^{(k)} :=\Vert X^{\top} W^{(k)}X \Vert_{2}/n
\label{eq:lip}
\end{alignat}
then
\begin{alignat}{4}
G(x^{(l)}) - G(x^\ast) \leq \frac{2 L_h^{(k)} \Vert x^{(0)} - x^* \Vert_2^2}{(l +
  1)^2}.
\label{eq:rate}
\end{alignat}
\end{thm}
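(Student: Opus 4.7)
The approach is to recognize Theorem \ref{thm1} as a direct specialization of the FISTA convergence theorem from \cite{Beck2010} to the weighted quadratic $h$ in \eqref{twentyfour}. The argument therefore reduces to (i) checking that $(h, J)$ fits the FISTA framework, (ii) identifying the Lipschitz constant of $\nabla h$ as precisely $L^{(k)}$ from \eqref{eq:lip}, and (iii) reading off the rate \eqref{eq:rate} from the published FISTA bound.

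For step (i), $J$ is proper, convex and closed by the standing assumption in \eqref{five}, and $h$ as defined in \eqref{twentyfour} is a convex quadratic, hence continuously differentiable. The extrapolation choice $\omega_l = (l-1)/(l+2)$ together with the initialization $x^{(0)} = x^{(1)}$ specified in the inner PG loop coincides with the momentum scheme used in the accelerated method of \cite{Beck2010}, and the proximal update with constant step size $\delta^{(k)}$ is exactly the FISTA iteration applied to the composite objective $G = h + \lambda J$.

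The only nontrivial step is (ii). From \eqref{twentyfive}, $\nabla h(\theta) = \frac{1}{n} X^\top W^{(k)}(X\theta - z^{(k)})$ is affine in $\theta$ with Jacobian $\frac{1}{n} X^\top W^{(k)} X$, a symmetric positive semidefinite matrix. Consequently
\begin{alignat*}{4}
\Vert \nabla h(\theta_1) - \nabla h(\theta_2) \Vert_2 = \frac{1}{n}\Vert X^\top W^{(k)} X (\theta_1 - \theta_2) \Vert_2 \leq \frac{\Vert X^\top W^{(k)} X \Vert_2}{n} \Vert \theta_1 - \theta_2 \Vert_2,
\end{alignat*}
with equality attained in the direction of the top right singular vector of $\sqrt{W^{(k)}} X$. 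Thus $L^{(k)} = \Vert X^\top W^{(k)} X\Vert_2 / n$ is the smallest Lipschitz constant of $\nabla h$, which matches \eqref{eq:lip}.

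With these pieces in place, step (iii) is to invoke the FISTA convergence theorem of \cite{Beck2010}: for any constant step size $\delta^{(k)} \in (0, 1/L^{(k)}]$ one has $G(x^{(l)}) - G(x^\ast) \leq 2 L^{(k)} \Vert x^{(0)} - x^\ast \Vert_2^2/(l+1)^2$, which is exactly \eqref{eq:rate}. I do not expect a genuine obstacle here beyond notational bookkeeping between the two papers; the substantive content of the theorem is really the identification of the admissible step size range in terms of $\Vert X^\top W^{(k)} X \Vert_2 / n$, an observation that will be important later because the tensor product structure of $X$ allows this spectral norm to be bounded cheaply without ever forming $X$ explicitly.
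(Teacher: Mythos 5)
Your overall strategy is exactly the paper's: establish that $L^{(k)} = \Vert X^{\top} W^{(k)} X \Vert_{2}/n$ is a Lipschitz constant for the affine gradient $\nabla h$ via the fact that the spectral norm is the operator norm induced by the $2$-norm, and then read the rate off the accelerated proximal gradient theorem of \cite{Beck2010}. Your step (ii) is correct and matches the paper's computation (your added remark that equality is attained along the top singular direction, so $L^{(k)}$ is the \emph{minimal} Lipschitz constant, also appears in the paper's proof).

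There is, however, one concrete inaccuracy that the paper's proof explicitly has to address and your proposal elides: the extrapolation sequence $\omega_l = (l-1)/(l+2)$ does \emph{not} coincide with the FISTA momentum scheme of \cite{Beck2010}. FISTA uses $\omega_l = (t_l - 1)/t_{l+1}$ with $t_1 = 1$ and $t_{l+1} = \bigl(1 + \sqrt{1 + 4 t_l^2}\bigr)/2$, whereas $\omega_l = (l-1)/(l+2)$ corresponds to the simpler choice $t_l = (l+1)/2$. Consequently you cannot invoke Theorem 1.4 of \cite{Beck2010} verbatim, as you do in step (iii); you must check that its proof carries over to this modified sequence. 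It does, because the FISTA analysis uses only three properties of $(t_l)$: the initialization $t_1 = 1$ (consistent with $x^{(0)} = x^{(1)}$), the lower bound $t_l \geq (l+1)/2$ (which produces the $(l+1)^2$ denominator in \eqref{eq:rate}), and the recursion inequality $t_{l+1}^2 - t_{l+1} \leq t_l^2$; for $t_l = (l+1)/2$ the last reads $l(l+2)/4 \leq (l+1)^2/4$, which holds. This is precisely the closing remark of the paper's proof, so the repair is one line --- but as written, your assertion that the two momentum schemes coincide is false, and the appeal to the published theorem is not licensed without that verification.
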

\begin{proof} The theorem is a consequence of Theorem 1.4 in
  \cite{Beck2010} once we establish that $L^{(k)}$ is a Lipschitz
  constant for $\nabla h$. To this end note that the spectral norm $\Vert
\cdot \Vert_{2}$ is the operator norm induced by the
2-norm on $\mathbb{R}^p$, which implies that 
\begin{alignat}{4}\label{eighteennew}
\Vert   \nabla h( \theta)-\nabla h(\theta')\Vert_2&\leq \frac{1}{n} \Vert      X^{\top} W^{(k)}X \Vert_{2} \Vert \theta-\theta'\Vert_2,
\end{alignat}
and $L^{(k)}$ is indeed the minimal Lipschitz constant. It should be
noted that Theorem 1.4 in  \cite{Beck2010} is phrased in terms of an
acceleration sequence of the form $\omega_l = (t_l - 1) / t_{l+1}$ where $(t_l)$
is a specific sequence that fulfills $t_l \geq (l+1) / 2$. The
acceleration sequence considered here corresponds to $t_l = (l + 1) /
2$, and their proof carries over to this case without changes. 
\end{proof}

From \eqref{eq:rate} we see that the objective function values
converge at rate $O(l^{-2})$ for the given choice of extrapolation
sequence. Without extrapolation, that is, with $\omega_l = 0$ for all $l
\in \mathbb{N}$, the convergence rate is $O(l^{-1})$, see
e.g. Theorem 1.1 in \cite{Beck2010}. In this case $(x^{(l)})$
always converges towards a minimizer, see Theorem 1.2 in
\cite{Beck2010}. We are not aware of results that establish
convergence of $(x^{(l)})$ for general $h$ when extrapolation is
used. However, if $X$ has rank $p$ and the weights are all
strictly positive, the quadratic $h$ given by \eqref{twentyfour}
results in a strictly convex and level bounded objective function $G$, in
which case \eqref{eq:rate} forces
$(x^{(l)})$ to converge towards the unique minimizer. 

The following result shows how the tensor product structure can be
exploited to give a computable upper bound on the Lipschitz constant
\eqref{eq:lip}. 

\begin{prop}\label{prop:one} Let $W^{(k)}$ denote the diagonal weight matrix
  with diagonal elements $w_i^{(k)}$, $i = 1, \ldots, n$, then 
\begin{alignat}{4}\label{nineteennew} 
L^{(k)} \leq \hat{L}^{(k)} \coloneqq \frac{\max(w_{i}^{(k)}) }{n}\sum_{r=1}^c \prod_{j=1}^d\varrho(X_{r,j}^\top X_{r,j}) 
\end{alignat}
 where $\varrho$  denotes the spectral radius. 
\end{prop}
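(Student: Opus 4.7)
The plan is to exploit the two pieces of structure in $X$ separately: the block concatenation across $r$ and the tensor product within each block. Rewriting $L^{(k)} = \|X^{\top} W^{(k)} X\|_2/n$ as $\|\sqrt{W^{(k)}}X\|_2^2/n$ (using that $W^{(k)}$ is a positive diagonal matrix so it admits a diagonal square root, and that for any matrix $A$, $\|A^\top A\|_2 = \|A\|_2^2$) isolates the operator norm of the weighted design matrix, which I can then attack via a triangle-plus-Cauchy-Schwarz bound followed by the standard identity $\varrho(A\otimes B)=\varrho(A)\varrho(B)$ for tensor products.

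First I would handle the concatenation. For any $\theta\in\mathbb{R}^p$ partitioned as $\theta=(\theta_1^\top,\ldots,\theta_c^\top)^\top$, write $X\theta=\sum_{r=1}^c X_r\theta_r$ so that
\begin{equation*}
\|\sqrt{W^{(k)}}X\theta\|_2 \;\leq\; \sum_{r=1}^c \|\sqrt{W^{(k)}}\|_2\,\|X_r\|_2\,\|\theta_r\|_2 \;\leq\; \sqrt{\max_i w_i^{(k)}}\,\Bigl(\sum_{r=1}^c \|X_r\|_2^2\Bigr)^{1/2}\|\theta\|_2,
\end{equation*}
where the first inequality uses submultiplicativity of the spectral norm and $\|\sqrt{W^{(k)}}\|_2=\sqrt{\max_i w_i^{(k)}}$, and the second uses Cauchy-Schwarz together with $\sum_r\|\theta_r\|_2^2=\|\theta\|_2^2$. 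Squaring and taking the supremum over $\|\theta\|_2=1$ gives $\|\sqrt{W^{(k)}}X\|_2^2 \leq \max_i w_i^{(k)} \sum_{r=1}^c \|X_r\|_2^2$.

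Next I would reduce $\|X_r\|_2^2=\varrho(X_r^{\top}X_r)$ to a product of marginal spectral radii using the tensor structure \eqref{fivenew}. The key algebraic fact is $(A\otimes B)^\top(A\otimes B)=(A^\top A)\otimes(B^\top B)$, iterated $d$ times, yielding
\begin{equation*}
X_r^{\top}X_r \;=\; (X_{r,d}^\top X_{r,d})\otimes\cdots\otimes(X_{r,1}^\top X_{r,1}).
\end{equation*}
Since the eigenvalues of a Kronecker product of square matrices are the products of eigenvalues of the factors, and each $X_{r,j}^\top X_{r,j}$ is positive semidefinite, we obtain $\varrho(X_r^\top X_r)=\prod_{j=1}^d \varrho(X_{r,j}^\top X_{r,j})$.

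Combining the two bounds yields $\|X^\top W^{(k)} X\|_2\leq \max_i w_i^{(k)}\sum_{r=1}^c\prod_{j=1}^d\varrho(X_{r,j}^\top X_{r,j})$, and dividing by $n$ gives $L^{(k)}\leq \hat L^{(k)}$. I do not anticipate a serious obstacle: the two reductions are essentially independent, and the only point that needs care is bookkeeping the square root of $W^{(k)}$ correctly so that the factor $\max_i w_i^{(k)}$ (and not its square root) appears in the final bound.
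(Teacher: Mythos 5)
Your proof is correct, and it differs from the paper's in one genuine respect: the handling of the block concatenation. The paper first peels off the weights by submultiplicativity, $L^{(k)} \leq \Vert X \Vert_2^2 \Vert W^{(k)} \Vert_2 / n$, and then invokes an external block-matrix result (Lemma 3.20 in Bapat, 2010) stating that the spectral radius of a positive semidefinite matrix is bounded by the sum of the spectral radii of its diagonal blocks, applied to $X^\top X$ with diagonal blocks $X_r^\top X_r$. You instead absorb the weights via the symmetric square root, $L^{(k)} = \Vert \sqrt{W^{(k)}} X \Vert_2^2 / n$, and prove the concatenation bound from scratch: writing $X\theta = \sum_r X_r \theta_r$ and combining the triangle inequality with Cauchy--Schwarz and $\sum_r \Vert \theta_r \Vert_2^2 = \Vert \theta \Vert_2^2$ yields $\Vert \sqrt{W^{(k)}} X \Vert_2^2 \leq \max_i w_i^{(k)} \sum_r \Vert X_r \Vert_2^2$, which is exactly the instance of Bapat's lemma needed here (the Gram-matrix case). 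The final tensor-product step --- $X_r^\top X_r = (X_{r,d}^\top X_{r,d}) \otimes \cdots \otimes (X_{r,1}^\top X_{r,1})$ together with the fact that Kronecker-product eigenvalues are products of factor eigenvalues, so that $\varrho$ factorizes over the positive semidefinite factors --- is identical to the paper's. What your route buys is self-containedness: no citation to the block-matrix literature is needed, and the argument makes transparent why the sum over $r$ appears (it is Cauchy--Schwarz across blocks). What the paper's route buys is brevity and a statement (the Bapat lemma) that holds for arbitrary positive semidefinite block matrices, not just Gram matrices of concatenated designs. Your one flagged point of care --- that $\max_i w_i^{(k)}$ and not its square root survives after squaring --- is handled correctly, since $\Vert \sqrt{W^{(k)}} \Vert_2^2 = \max_i w_i^{(k)}$ for a nonnegative diagonal matrix.
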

\begin{proof} Since the spectral norm is an operator norm it is
  submultiplicative, which gives that 
\begin{alignat}{4}\label{twentysix}
 L^{(k)}&\leq \frac{1}{n}\Vert      X^{\top}\Vert_{2} \Vert X\Vert_{2} \Vert W^{(k)} \Vert_{2} =\frac{1}{n}\Vert X\Vert_{2}^2 \Vert W^{(k)} \Vert_{2}. 
\end{alignat}
Now $W^{(k)}$ is diagonal with nonnegative entries, so $\Vert W^{(k)}
\Vert_{2} = \max(w_{i}^{(k)})$, and $\Vert X\Vert_{2}^2$ is the largest
eigenvalue of the symmetric matrix $X^{\top} X$ (the spectral radius), hence 
\begin{alignat}{4}
 L^{(k)} \leq \frac{\max(w_{i}^{(k)})}{n} \varrho(X^{\top} X).
\end{alignat}
Furthermore,  as $X^\top X$ is a  positive semidefinite matrix with
diagonal blocks given by $X_r^\top X_r$ we get (see e.g. Lemma 3.20 in
\cite{bapat2010}) that
\begin{alignat}{4}\label{twentysevennew}
 \varrho(X^\top X)\leq \sum_{r=1}^c \varrho(X_r^\top X_r).
\end{alignat}
By the  properties of the tensor product we find that
\begin{alignat}{4}
X_r^\top X_r = X_{r, 1}^\top X_{r, 1} \otimes \ldots \otimes X_{r,
  d}^\top X_{r, d},
\end{alignat}
whose eigenvalues are of the form
$\alpha_{1,k_1}\alpha_{2,k_2}\cdots\alpha_{d,k_d}$, with
$\alpha_{j,k_j}$ being the $k_j$th  eigenvalue of  $X_{r,j}^\top
X_{r,j}$, see e.g. Theorem 4.2.12 in \cite{horn1991}. In particular,  
\begin{alignat*}{4}
\varrho(X_r^\top  X_r)=\prod_{j=1}^d\varrho(X_{r,j}^\top  X_{r,j}),  
\end{alignat*}
and this completes the proof. 
\end{proof}

Note that for $c = 1$ the upper bound is $\hat{L}^{(k)}
= \max(w_{i}^{(k)}) \prod_{j=1}^d\varrho(X_{1,j}^\top
X_{1,j})/n$, which is valid for any weight matrix. If the weight
matrix is itself a tensor product it is possible to compute the
Lipschitz constant exactly. Indeed, if $W^{(k)} = W^{(k)}_d \otimes
\ldots \otimes W^{(k)}_1$ then 
\begin{alignat*}{4}
X^\top W^{(k)} X = X_{1,d}^\top W^{(k)}_d X_{1,d} \otimes
\ldots \otimes X_{1,1}^\top W^{(k)}_1 X_{1,1},
\end{alignat*}
and by similar arguments as in the proof above,
\begin{alignat}{4}
L^{(k)} = \frac{1}{n} \prod_{j=1}^d\varrho(X_{1,j}^\top W^{(k)}_j X_{1,j}).
\end{alignat}

The outer loop is similar to the outer loop used in
e.g. the R packages \verb+glmnet+, \cite{friedman2010}, and
\verb+sglOptim+, \cite{vincent2014}. For completeness we demonstrate
that the outer loop with the stepsize determined by the Armijo rule 
is a special case of the algorithm treated in \cite{tseng2009}, which
implies a global convergence result of the outer loop. 

Following \cite{tseng2009} the Armijo rule gives the stepsize 
$\alpha_k \coloneqq b^j \alpha_0$, where $\alpha_{0} > 0$ and $b
\in (0,1)$ are given constants and  $j$ is determined as follows: With 
$d^{(k)} = \tilde{\theta}^{(k+1)} - \theta^{(k)}$ and
\begin{alignat}{4} \nonumber
\Delta_k \coloneqq - (u^{(k)})^\top X
d^{(k)} + \lambda (J(\tilde{\theta}^{(k+1)}) - J(\theta^{(k)})),
\end{alignat}  
then $j \in \mathbb{N}_0$ is the smallest nonnegative integer for which  
\begin{alignat}{4}
F(\theta^{(k)} + b^j \alpha_0 d^{(k)}) \leq
F(\theta^{(k)}) +  b^j \alpha_0 v \Delta_k,
\end{alignat}
where $ v \in (0,1)$ is a fixed constant.

\begin{thm} \label{thm:2}
  Let the stepsize, $\alpha_k$, be given by the Armijo rule
  above. If the design matrix $X$ has rank $p$ and if there exist
  constants $\bar c \geq \underline c > 0$ such that for all $k \in
  \mathbb{N}$ the diagonal weights in $W^{(k)}$, denoted
  $w_{i}^{(k)}$, satisfy 
\begin{alignat}{4}
\underline c \leq w_{i}^{(k)} \leq \bar c
\label{eq:wbound}
\end{alignat}
for $i =1, \ldots, n$, then $(F(\theta^{(k)}))$ is nonincreasing and any cluster point of $(\theta^{(k)})$ is a
stationary point of the objective function $F$. 
\end{thm}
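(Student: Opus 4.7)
The plan is to verify that the outer GD loop is a special case of the coordinate gradient descent (CGD) framework of \cite{tseng2009} and then invoke the global convergence theorem established there. First, I would recognize the subproblem \eqref{eleven} as the scaled quadratic approximation Tseng uses to generate a descent direction. Expanding $\tfrac{1}{2n}\Vert\sqrt{W^{(k)}}(X\theta - z^{(k)})\Vert_2^2$ around $\theta^{(k)}$, and using the identity $W^{(k)}z^{(k)} = u^{(k)} + W^{(k)}X\theta^{(k)}$ from \eqref{ten} together with $\nabla_\theta(-l)(\theta^{(k)}) = -X^\top u^{(k)}$ from \eqref{three}, the direction $d^{(k)} \coloneqq \tilde\theta^{(k+1)} - \theta^{(k)}$ is, up to constants, the minimizer of
$$\nabla_\theta(-l)(\theta^{(k)})^\top d + \tfrac{1}{2}d^\top H^{(k)} d + \lambda J(\theta^{(k)} + d),$$
where $H^{(k)} \coloneqq X^\top W^{(k)} X / n$. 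This is precisely the CGD update of \cite{tseng2009} with scaling matrix $H^{(k)}$ and full index set $\{1,\ldots,p\}$.

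The second step is to check the quantitative hypotheses required by the convergence theorem, namely that the scaling matrices $H^{(k)}$ are uniformly bounded and uniformly positive definite. Submultiplicativity of the spectral norm together with the upper bound in \eqref{eq:wbound} gives
$$\lambda_{\max}(H^{(k)}) \leq \frac{\bar c}{n}\Vert X\Vert_2^2,$$
while the rank-$p$ assumption on $X$ makes $X^\top X$ positive definite, so the lower bound in \eqref{eq:wbound} yields $W^{(k)} \succeq \underline c I$ and hence
$$\lambda_{\min}(H^{(k)}) \geq \frac{\underline c}{n}\lambda_{\min}(X^\top X) > 0.$$
These uniform spectral bounds on $H^{(k)}$, together with the convexity and closedness of $J$ and the continuous differentiability of $-l$, are exactly the regularity conditions of Tseng's framework.

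The final step is to establish the descent property and apply the theorem. Using $\theta^{(k)}$ itself as a feasible point in \eqref{eleven} and again invoking $X\theta^{(k)} - z^{(k)} = -(W^{(k)})^{-1}u^{(k)}$, a direct comparison yields the descent inequality
$$\Delta_k \leq -\tfrac{1}{2n}(d^{(k)})^\top X^\top W^{(k)} X d^{(k)} \leq -\tfrac{\underline c}{2n}\lambda_{\min}(X^\top X)\Vert d^{(k)}\Vert_2^2,$$
so $\Delta_k < 0$ whenever $d^{(k)} \neq 0$, and $d^{(k)} = 0$ characterizes stationary points of $F$. Consequently, the Armijo backtracking terminates in finitely many steps at each iteration and produces $F(\theta^{(k+1)}) \leq F(\theta^{(k)}) + b^j\alpha_0 v\Delta_k \leq F(\theta^{(k)})$, giving the monotonicity claim. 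The global convergence theorem of \cite{tseng2009} then yields that every cluster point of $(\theta^{(k)})$ is stationary for $F$. The principal obstacle I anticipate is bookkeeping: aligning our notation with Tseng's formalism, verifying that the Armijo descent measure above is exactly Tseng's, and confirming that continuous differentiability of $-l$ is sufficient (rather than a global Lipschitz hypothesis) when one only seeks stationarity of cluster points of a sequence along which $F$ is nonincreasing.
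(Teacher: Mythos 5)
Your proposal takes essentially the same route as the paper's proof: expand the weighted least-squares subproblem \eqref{eleven} to identify $d^{(k)}$ with the search direction (6) of \cite{tseng2009}, verify their Assumption 1 on the scaling matrix $H^{(k)} = X^\top W^{(k)} X$ via the rank-$p$ assumption and the weight bounds \eqref{eq:wbound}, and invoke their Theorem 1 (a) and (e); your explicit spectral bounds and the descent inequality for $\Delta_k$ just make quantitative what the paper leaves to Tseng's theorem. The only flaw is a factor-of-$n$ bookkeeping slip (pairing the unscaled gradient $\nabla_\theta(-l)$ with $H^{(k)} = X^\top W^{(k)}X/n$ and penalty $\lambda J$, whereas consistency requires either $\tfrac{1}{n}\nabla_\theta(-l)$ with that $H^{(k)}$, or the unscaled gradient with $X^\top W^{(k)}X$ and penalty $n\lambda J$), which mirrors the looseness the paper itself hides behind ``$\propto$'' and does not affect the argument.
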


\begin{proof} The theorem is a consequence of Theorem 1 (a) and
  (e) in \cite{tseng2009} once we have established that the search
  direction, $d^{(k)} = \tilde{\theta}^{(k+1)} - \theta^{(k)}$, coincides with the
  search direction defined by (6) in \cite{tseng2009}. 
Letting $d \coloneqq\theta-\theta^{(k)}$ denote a (potential) search direction we see that 
\begin{alignat*}{4}
\frac{1}{2n} \Vert &\sqrt{W^{(k)}}(X\theta-z^{(k)})\Vert^2_2\\
 &=\frac{1}{2n} (-(W^{(k)})^{-1}u^{(k)}+X(\theta-\theta^{(k)}))^\top W^{(k)}(-(W^{(k)})^{-1}u^{(k)}+X(\theta-\theta^{(k)})) \\
&=\frac{1}{2n} ((u^{(k)})^\top (W^{(k)})^{-1}u^{(k)}-(u^{(k)})^\top  Xd-d^\top X^\top u^{(k)}+d^\top X^\top W^{(k)} Xd)\\
& \propto 
- \underbrace{(u^{(k)})^\top X}_{\nabla_\theta l(\eta^{(k)})^\top} d+\frac{1}{2}d^\top \underbrace{X^\top
W^{(k)} X}_{H^{(k)}}d + C_k,
  \end{alignat*}
where $C_k$ is a constant not depending upon $\theta$. This shows that 
\begin{alignat}{4}
d^{(k)} = \argmin_{d \in \mathbb{R}^p} -\nabla_\theta
l(\eta^{(k)})^\top d+\frac{1}{2}d^\top H^{(k)} d + \lambda J(\theta^{(k)} + d),
\end{alignat}
and this is indeed the search direction defined by (6) in
\cite{tseng2009} (with the coordinate block consisting of all
coordinates). Observe that $H^{(k)} = X^T W^{(k)} X$ fulfills
Assumption 1 in \cite{tseng2009} by the assumptions that $X$ has rank
$p$ and that the weights are uniformly bounded away from 0 and
$\infty$. Therefore, all conditions for Theorem 1 in \cite{tseng2009}
are fulfilled, which completes the proof.
\end{proof} 

The convergence conclusion can be sharpened by making further
assumptions on the objective function and the weights.  

\begin{cor}  Suppose that  the weights are given by
\begin{alignat}{4}\label{nine}
  w_{i}^{(k)} = \vartheta'(\eta_i^{(k)})(g^{-1})'(\eta_i^{(k)}),\quad i = 1,\ldots,n.
\end{alignat}
If $X$ has rank $p$, if $F$ is level bounded,
  if the PMLE, $\theta^\ast$, is unique and if $(g^{-1})'$ is nonzero
 everywhere  it holds that $\theta^{(k)} \to \theta^\ast$ for $k \to \infty$. 
\end{cor}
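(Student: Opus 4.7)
The plan is to combine Theorem~\ref{thm:2} with a compactness argument and the uniqueness hypothesis. First, Theorem~\ref{thm:2} already tells us that $(F(\theta^{(k)}))$ is nonincreasing, so every iterate lies inside the level set $\{\theta : F(\theta) \leq F(\theta^{(0)})\}$, which is bounded by hypothesis. Hence $(\theta^{(k)})$ is bounded and its cluster-point set is nonempty.

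Next I have to check that Theorem~\ref{thm:2} actually applies under the specific weight choice \eqref{nine}, i.e.\ that there exist constants $0 < \underline{c} \leq \bar{c}$ with $\underline{c} \leq w_i^{(k)} \leq \bar{c}$ for all $k$ and $i$. Because the iterates are bounded, the linear predictors $\eta^{(k)} = X\theta^{(k)}$ range over a bounded subset of $\mathbb{R}^n$ whose closure $E$ is compact. On $E$ the map $\eta \mapsto \vartheta'(\eta)(g^{-1})'(\eta)$ is continuous (both $g^{-1}$ and $\vartheta$ are $C^1$ in the assumed exponential family setup). The identity $b'(\vartheta(\eta)) = g^{-1}(\eta)$ together with positivity of $b''$ gives $\vartheta'(\eta) = (g^{-1})'(\eta)/b''(\vartheta(\eta))$, so the weight function equals $((g^{-1})'(\eta))^2 / b''(\vartheta(\eta))$ and is strictly positive exactly when $(g^{-1})'(\eta) \neq 0$. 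By the nonvanishing assumption on $(g^{-1})'$, this is a continuous strictly positive function on the compact set $E$, so it is bounded above and bounded away from zero, delivering the required $\underline{c}, \bar{c}$.

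With \eqref{eq:wbound} verified, Theorem~\ref{thm:2} yields that every cluster point of $(\theta^{(k)})$ is a stationary point of $F$. Since $F$ is convex (the penalty $J$ is convex by assumption, and $-l$ is convex in the exponential family setup considered), every stationary point of $F$ is a global minimizer; uniqueness of the PMLE then forces every cluster point to equal $\theta^\ast$. A bounded sequence in $\mathbb{R}^p$ with a unique cluster point converges to it, so $\theta^{(k)} \to \theta^\ast$.

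The main obstacle I expect is the weight-bound step: getting a uniform positive lower bound on $w_i^{(k)}$ is precisely where the hypothesis $(g^{-1})' \neq 0$ is essential, and it has to be combined with compactness of the closure of the iterates to upgrade pointwise positivity to a uniform bound. Once this is in place, the conclusion reduces to invoking Theorem~\ref{thm:2} and a standard bounded-sequence-with-unique-cluster-point argument.
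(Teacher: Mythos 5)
Your proof is correct in substance and follows essentially the same route as the paper: descent keeps the iterates in the sublevel set $\Theta_0 = \{\theta : F(\theta) \leq F(\theta^{(0)})\}$, compactness of that set plus continuity and strict positivity of the weight function (via $(g^{-1})' \neq 0$) upgrades pointwise positivity to the uniform bounds \eqref{eq:wbound}, and then Theorem~\ref{thm:2} plus uniqueness and a bounded-sequence-with-unique-cluster-point argument closes the proof. Two points of comparison are worth noting. First, a mild circularity in your ordering: you invoke Theorem~\ref{thm:2} to get that $(F(\theta^{(k)}))$ is nonincreasing \emph{before} verifying \eqref{eq:wbound}, which is a hypothesis of that theorem. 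The repair is the one implicit in the paper: the descent $F(\theta^{(k+1)}) \leq F(\theta^{(k)})$ holds step by step directly from the Armijo rule (needing only that each $W^{(k)}$ is positive definite, which pointwise positivity of \eqref{nine} gives at every iterate), so an induction keeps $\theta^{(k)} \in \Theta_0$ for all $k$ without any appeal to the uniform bounds; the uniform bounds then follow from compactness of $\Theta_0$ and are needed only for the cluster-point conclusion. Second, your final step identifies stationary points with global minimizers by claiming $F$ is convex because ``$-l$ is convex in the exponential family setup.'' The paper does not assume this, and for a non-canonical link $-l$ need not be convex in $\theta$; the paper instead reads the uniqueness hypothesis as saying $\theta^\ast$ is the unique stationary point of $F$ in $\Theta_0$ and concludes directly from Theorem~\ref{thm:2}. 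Your convexity argument is valid for canonical links (and for all models implemented in the package), but as a proof of the corollary in the paper's stated generality it overreaches slightly, whereas your explicit computation $w = ((g^{-1})'(\eta))^2 / b''(\vartheta(\eta))$ is a nice concrete substitute for the paper's appendix remark that $w_i$ is the variance of the score.
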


\begin{proof} The sublevel set  $\Theta_0 \coloneqq \{\theta \mid F(\theta)\leq
  F(\theta^{(0)})\}$ is bounded by assumption, and it is closed because $J$ is closed and $-l$ is
  continuous. Hence, $\Theta_0$ is compact. Since the weights as a
  function of $\theta$,
\begin{alignat}{4}
\theta \mapsto
\vartheta'(\eta_i(\theta))(g^{-1})'(\eta_i(\theta))
\end{alignat}
for $i = 1, \ldots, n$, are continuous and strictly positive functions --
because $(g^{-1})'$ is assumed nonzero everywhere, see Appendix \ref{sec:expfam} -- they attain a strictly
positive minimum and a finite maximum over the compact set $\Theta_0$. This implies
that \eqref{eq:wbound} holds. Since $\theta^{(k)} \in \Theta_0$ and
$\theta^\ast$ is a unique stationary point in $\Theta_0$, it
follows from Theorem \ref{thm:2}, using again that $\Theta_0$ is
compact, that $\theta^{(k)} \to \theta^\ast$ for $k \to \infty$. 
\end{proof}

The weights given by \eqref{nine} are the common weights used for
GLMs, but exactly the same argument as above applies to other choices
as long as they are strictly positive and continuous functions of the
parameter $\theta$. A notable special case is  $w_i^{(k)} = 1$. Another
possibility, which is useful in the framework of GLAMs, is discussed
in Section \ref{sec:glamops}.

Observe that if $-l$ is strongly convex then $F$ is level bounded, $X$
has rank $p$ and $\theta^\ast$ is unique. If $X$ does not have rank
$p$, in particular, if $p > n$, we are not presenting any results on
the global convergence of the outer loop. Clearly, additional
assumptions on the penalty function $J$ must then be made to guarantee
convergence. 

\section{Implementation}\label{sec:glamops}

In this section we show how the computations required in the GD-PG
algorithm can be implemented to exploit the array
structure. The penalty function $J$ is not assumed to have any special
structure in general, and its evaluation is not discussed, but we do briefly discuss the computation of the
proximal operator for some special choices of $J$. We also describe
the R package, \verb+glamlasso+, which implements the
algorithm for  2 and 3-dimensional array models with the $\ell_1$-penalty  and the smoothly clipped absolute deviation (SCAD) penalty, and we present results
of further benchmark studies using simulated data. 

\subsection{Array operations}

The linear algebra operations needed in the GD-PG algorithm can all be
expressed in terms of two maps, $\Hop$ and $\Gop$, which are
defined below. The maps work directly on the tensor factors in terms
of $\rho$ defined in Appendix \ref{app:rho}. Introduce 
\begin{alignat}{4}\label{eighteen}
 \Hop( \langle X_{r, j} \rangle, \langle \Theta_r\rangle) \coloneqq \sum_{r=1}^c \rho(X_{r,d},\ldots,\rho(X_{r,1},\Theta_r)\ldots),
\end{alignat}
which gives an $n_1\times\cdots\times n_d$ array such that $\ve(\Hop(
\langle X_{r, j} \rangle, \langle \Theta_r\rangle))$ is the linear
predictor. Introduce also  
\begin{alignat}{4}\label{nineteen}
\Gop(\langle X_{r,j} \rangle , U) \coloneqq 
 \langle
 \rho(X_{1,d}^{\top},\ldots,\rho(X_{1,1}^{\top},U)\ldots),\ldots,\rho(X_{c,d}^{\top},\ldots,\rho(X_{c,1}^{\top},U)\ldots) \rangle
\end{alignat}
for $U$ an $n_1\times\cdots\times n_d$ array, which gives a tuple of
$c$ arrays. The map $G$ is used to carry out the gradient computation in \eqref{three}.

Below we describe how the linear algebra operations required in steps
2, 4 and 5 in Algorithm \ref{alg:cgdfpg} can be carried out using the two maps above. In doing
so we use ``$\equiv$'' to denote equality of vectors and arrays (or
tuples of arrays) up to a rearrangement of the entries. In the
implementation such a rearrangement is never required, but it gives a
connection between the array and vector representations of the
components in the algorithm.

\begin{description}
\item[Step 2:] The linear predictor is first computed,  
\begin{alignat}{4}
X^\top \theta^{(k)} \equiv \Hop(\langle X_{r,j} \rangle, \langle
\Theta_r^{(k)} \rangle).
\label{eq:linpre}
\end{alignat}
The array $V^{(k)}$ is computed by an entrywise computation,
e.g. by \eqref{nine}.  The arrays $U^{(k)}$ and $Z^{(k)}$ are computed by entrywise
 computations using \eqref{four} and \eqref{ten}, respectively. If the
 weights given by \eqref{nine} are used, $Z^{(k)}$ can be computed
 directly by \eqref{tennew} and $U^{(k)}$ does not need to be computed.
\item[Step 4:] In the inner PG loop the gradient,  $\nabla h$, must be
  recomputed in each iteration. To this end,
\begin{alignat}{4}
X^\top  W^{(k)}  z^{(k)}  & \equiv
\Gop(\langle X_{r,j} \rangle, V^{(k)}\odot Z^{(k)}) \label{twentyeight}
\end{alignat}
is precomputed. Here $\odot$ denotes the entrywise (Hadamard)
product. Then $\nabla h (\theta)$ is computed in terms of
\begin{alignat}{4}
X^\top W^{(k)}X\theta& \equiv \Gop(\langle X_{r,j} \rangle,
V^{(k)} \odot   \Hop(\langle X_{r,j} \rangle, \langle \Theta_{r} \rangle) ).\label{twentysixnew}  
\end{alignat}
\item[Step 5:] For the stepsize computation using the Armijo rule the
  linear predictor, 
\begin{alignat}{4}
X^\top \tilde{\theta}^{(k+1)} \equiv \Hop(\langle X_{r,j} \rangle, \langle
\tilde{\Theta}_r^{(k+1)} \rangle),
\label{eq:linpre2}
\end{alignat}
is first computed. The computation of $\Delta_k$ is achieved via
computing inner products of $U^{(k)}$ and the linear predictors
\eqref{eq:linpre} and \eqref{eq:linpre2}. The
line search then involves iterative recomputations of the linear
predictor via the map $\Hop$. 
\end{description}

If $\delta_k$ is not chosen sufficiently small to guarantee
convergence of the inner PG loop a line search must also be carried
out in step 4. To this end, repeated evaluations of $h$ are needed,
with $h(\theta)$ being computed as the weighted 2-norm of 
$ \Hop(\langle X_{r,j} \rangle, \langle \Theta_r \rangle) - Z^{(k)}$
with weights $V^{(k)}$. 

\subsection{Tensor product weights}

The bottleneck in the GD-PG algorithm is \eqref{twentysixnew}, which
is an expensive operation that has to be carried out repeatedly. If
the diagonal weight matrix is a tensor product, the computations
can be organized differently. This can reduce the run time,
especially when $p_{r,j} < n_j$. 

Suppose that $W^{(k)}=W_d^{(k)}\otimes\cdots\otimes W_1^{(k)}$, then 
\begin{alignat*}{4}
X_r^\top W^{(k)}X_m = X_{r,d}^\top W_d^{(k)}X_{m,d}\otimes\cdots\otimes
X_{r,1}^\top W_1^{(k)}X_{m,1}, \quad r,m = 1, \ldots,c.
\end{alignat*}
Hence $X^\top W^{(k)} X$ has tensor product blocks and
\eqref{twentysixnew} can be replaced by 
\begin{alignat}{4}
  X^{\top}W^{(k)}X\theta \equiv   
\langle \Hop(\langle X_{1,j}^\top W_j^{(k)}
X_{r,j} \rangle, \langle \Theta_r \rangle),\ldots,\Hop(\langle X_{c,j}^\top W_j^{(k)}
X_{r,j} \rangle, \langle \Theta_r \rangle) \rangle. 
\label{eq:kronweight}
\end{alignat}
The matrix products
$X_{r,k}^\top W_j^{(k)}X_{m,j}$ for $r,m = 1, \ldots, c$ and $j = 1,
\ldots, d$ can be precomputed in step 4. 
 
If the weight matrix is not a tensor product it might be approximated
by one so that \eqref{eq:kronweight} can be exploited. With $V^{(k)}$
denoting the weights in array form, then $V^{(k)}$ can be approximated by
$\hat{V}^{(k)}$, where 
\begin{alignat}{4}
\hat{V}^{(k)}_{i_1,\ldots, i_d} = \hat{v}^{(k)}_{1,i_1} \cdots
\hat{v}^{(k)}_{d,i_d},
\end{alignat}
with
\begin{alignat*}{4}
\hat{v}^{(k)}_{j, i_j} = \Bigg(\prod_{i_1,\ldots,i_{j-1},i_{j+1}, \ldots, i_d}
\frac{V^{(k)}_{i_1,\ldots,i_d}}{\overline{V}^{(k)}} \Bigg)^{\frac{1}{m_j}} =  \exp\bigg(
\frac{1}{m_j} \sum_{i_1,\ldots,i_{j-1},i_{j+1}, \ldots, i_d}  \log
V^{(k)}_{i_1,\ldots,i_d} - \log \overline{V}^{(k)} \bigg).
\end{alignat*}
Here $m_j = n/n_j = \prod_{j' \neq j} n_{j'}$ and 
\begin{alignat*}{4}
\overline{V}^{(k)} = \bigg( \prod_{i_1, \ldots, i_d} V_{i_1, \ldots,
  i_d} \bigg)^{\frac{1}{n}}.
\end{alignat*}
 The array $\hat{V}^{(k)}$ is equivalent to a diagonal weight matrix, which is a tensor product
of diagonal matrices with diagonals $(\hat{v}^{(k)}_{j,i})$. Observe
that if the weights in $V^{(k)}$ satisfy \eqref{eq:wbound} then so do
the approximating weights in $\hat{V}^{(k)}$.

\subsection{Proximal operations}

Efficient computation of the proximal
operator is necessary for the inner PG loop to be fast. Ideally $\mathrm{prox}_{\gamma}(z)$ should be
given in a closed form that is fast to evaluate. This is the case for
several commonly used penalty functions such as the 1-norm, the
squared 2-norm, their linear combination and several other separable
penalty functions.  

For the 1-norm, $\mathrm{prox}_{\gamma}(z)$ is given by soft
thresholding, see \cite{Beck2010} or \cite{Parikh2014}, that is,
\begin{alignat}{4}
\mathrm{prox}_{\gamma}(z)_i = (|z_i| - \gamma)_+ \mathrm{sign}(z_i).
\end{alignat}
For the squared 2-norm (ridge penalty) the proximal operator amounts
to multiplicative shrinkage,
\begin{alignat}{4}
\mathrm{prox}_{\gamma}(z)=\frac{1}{1+2\gamma} z,
\end{alignat}
see e.g. \cite{moreau1962}. For the elastic net penalty,
\begin{alignat}{4}
J(\theta) = ||\theta||_1 + \alpha||\theta||^2_2,
\end{alignat}
the proximal operator amounts to a composition of the proximal
operators for the 1-norm and the squared 2-norm, that is,
\begin{alignat}{4}
\mathrm{prox}_{\gamma}(z)_i = \frac{1}{1 + 2\alpha\gamma} (|z_i| - \gamma)_+ \mathrm{sign}(z_i),
\end{alignat}
see  \cite{Parikh2014}. For more examples see \cite{Parikh2014} and see also \cite{zhang2013}
for the proximal group shrinkage operator.

\subsection{The  {\upshape\texttt{glamlasso}}  R package}\label{sec:glamlasso}
The \verb+glamlasso+ R package provides an implementation of the GD-PG
algorithm for $\ell_1$-penalized  as well as SCAD-penalized
estimation in  2 and 3-dimensional GLAMs.  We note that as
  the SCAD penalty is non-convex the resulting optimization problem
  becomes non-convex and hence falls outside the original scope of our
  proposed method. However, by a local linear
  approximation to the SCAD penalty one obtains a weighted
  $\ell_1$-penalized problem. This is a convex problem, which may be
  solved within the framework proposed above. Especially, by
  iteratively solving a sequence of appropriately weighted
  $\ell_1$-penalized problems it is, in fact, possible to solve
  non-convex problems, see \cite{zou2008}. In the \verb+glamlasso+
  package this is implemented using the multistep adaptive lasso
  (MSA-lasso) algorithm from \cite{buhlmann2011}.  

The package is written in \verb+C+++ and utilizes the \verb+Rcpp+
package for the interface to \verb+R+, see
\cite{eddelbuettel2011}. At the time of writing this implementation
supports the Gaussian model with identity link, the
Binomial model with logit link, the Poisson model with log link and
the Gamma model with log link, but see \cite{lund2016} for the
current status.

The function \verb+glamlasso+ in the package solves the problem \eqref{five} with $J$ either given by the $\ell_1$-penalty or the SCAD penalty for a (user specified) number of penalty parameters $\lambda_{max}>\ldots>\lambda_{min}$. Here $\lambda _{max}$ is the infimum over the set of   penalty  parameters  yielding a zero  solution to \eqref{five} and $\lambda_{min}$ is a (user specified) fraction of $\lambda_{max}$.  For each model ($\lambda$-value) the algorithm is  warm-started by initiating the algorithm at the solution for the previous model. 

The interface of the function \verb+glamlasso+ resembles that of the \verb+glmnet+ function with some GD-PG  specific options. 

 The argument \verb+penalty+ controls the type of penalty to use. Currently the $\ell_1$-penalty (\verb+"lasso"+) and the SCAD penalty (\verb+"scad"+)  are implemented. 

 The argument \verb+steps+ controls the number of steps to use in the MSA algorithm when  the SCAD penalty is used.

The argument $\nu \in [0,1]$ (\verb+nu+)  controls the stepsize in the 
inner PG loop relative to the upper bound,  $\hat
L^{(k)}$, on the Lipschitz constant. Especially,  for $\nu\in (0,1)$ the stepsize is initially
$\delta^{(k)} \coloneqq 1/(\nu\hat L^{(k)})$ and the backtracking procedure
from \cite{beck2009} is employed only if divergence is detected.  For
$\nu = 1$ the stepsize is $\delta^{(k)} \coloneqq 1/\hat L_h$  and no
backtracking is done. For $\nu = 0$  the stepsize is initially
$\delta^{(k)} \coloneqq1$ and backtracking is done in each iteration. 

 The argument \verb+iwls = c("exact", "one", "kron1", "kron2" )+ specifies whether a
 tensor product approximation to the weights or the exact weights are
 used. The exact weights are the weights given by \eqref{nine}. Note
 that while a tensor product approximation may reduce the
 run time for the individual steps in the inner PG
 loop, it may also affect the
 convergence of the entire loop negatively.  

Finally, the argument \verb+Weights+ allows for a specification of
observation weights. This can be used -- as mentioned in
\cite{currie2006} -- as a way to model scattered (non-grid) data using
a GLAM by binning the data and then weighing each  bin according to
the number of observations in the bin. By setting some observation
weights to 0 it is also possible to model incompletely observed arrays
as illustrated in Section \ref{sec:incomplete}.

\subsection{Benchmarking on simulated data}\label{subsec:simexp}

To further investigate the performance of the GD-PG algorithm and its
implementation in \verb+glamlasso+ we carried out a benchmark study
based on simulated data  from a 3-dimensional GLAM. We report the setup and the results of the
benchmark study in this section.  See the  supplemental materials online for scripts used in this section.

For each $j\in \{1,2,3\}$  we  generated an  $n_j\times p_j$ matrix
$X_j$ by letting its rows be $n_j$ independent samples from a
$\mathcal{N}_{p_j}(0,\Sigma)$ distribution. The diagonal
entries of the covariance matrix $\Sigma$ were all equal to $\sigma
> 0$ and the
off diagonal elements were all equal to $\kappa$ for different choices
of $\kappa$. Since the design matrix $X = X_3 \otimes X_2 \otimes X_1$ is a tensor
product there is a non-zero correlation between the columns of $X$
even when $\kappa = 0$. Furthermore, each column of $X$ contains $n$
samples from a distribution with density given by a Meijer
$G$-function, see \cite{springer1970}.

We considered designs with $n_1=60r$, $n_2=20r$, $n_3=10r$ and
$p_1=\max\{3,n_1q\}$, $p_2=\max\{3,n_2q\}$, $p_3=\max\{3,n_3q\}$ for a
sequence of $r$-values and $q\in\{0.5, 3\}$. The number $q$ controls if
$p<n$ or $p>n$ and the size of the design matrix increases with $r$.  

The regression coefficients were generated as  
\begin{alignat*}{4}
\theta_m=(-1)^m\exp\Big(\frac{-(m-1)}{10}\Big)B_m,\quad m = 1,\ldots,p,
\end{alignat*}
where $B_1, \ldots, B_p$ are i.i.d. Bernoulli variables with $P(B_m =
1) = s$ for $s \in [0,1]$. Note that   $s$ controls the sparsity of the
coefficient vector and $s = 1$ results in a dense parameter vector. 

We generated observations from two different models for different
choices of parameters. 
\begin{description}
\item[Gaussian models:] We generated Gaussian observations with unit
  variance and the identity link with a dense parameter vector ($s =
  1$). The design was generated with $\sigma = 1$  and $\kappa \in \{0, 0.25\}$ for $p<n$ and $\kappa=0$ for $p>n $.
\item[Poisson models:] We generated Poisson observations with the
  log link function with a sparse parameter vector ($s = 0.01$). The
  design was generated with $\sigma = 0.71$ and  $\kappa \in \{0, 0.25\}$ for $p<n$  and $\kappa=0$ for $p>n $.  It is worth noting that  this quite artificial Poisson simulation setup  easily generates extremely large observations, which in turn can cause convergence problems for the algorithms, or even  NA values.
\end{description}

For each of the two models above and for the different combinations of
design and simulation parameters we computed the PMLE  
using \verb+glamlasso+ as well as \verb+glmnet+ for the same sequence of  $\lambda$-values.  The default length of this sequence is 100, however, both  \verb+glmnet+ and \verb+glamlasso+ will exit if  convergence is not obtained for some $\lambda$ value and return only the PMLEs for the preceding models along with the corresponding $\lambda $ sequence.

This benchmark study on simulated data was carried out on the same
computer as used for the benchmark study on real data as presented in
Section \ref{subsec:data}.  However, here we ran the simulation
  and optimization procedures five times for each size and parameter
  combination and report the run times along with their means as well
  as the mean relative deviations of the objective functions. See Section \ref{subsec:data} for other
  details on how \verb+glamlasso+ and \verb+glmnet+ were compared and
  Figures \ref{fig:one}, \ref{fig:two} and \ref{fig:three} below present the
  results.

\begin{figure}
\begin{center}
\includegraphics[scale = 0.45]{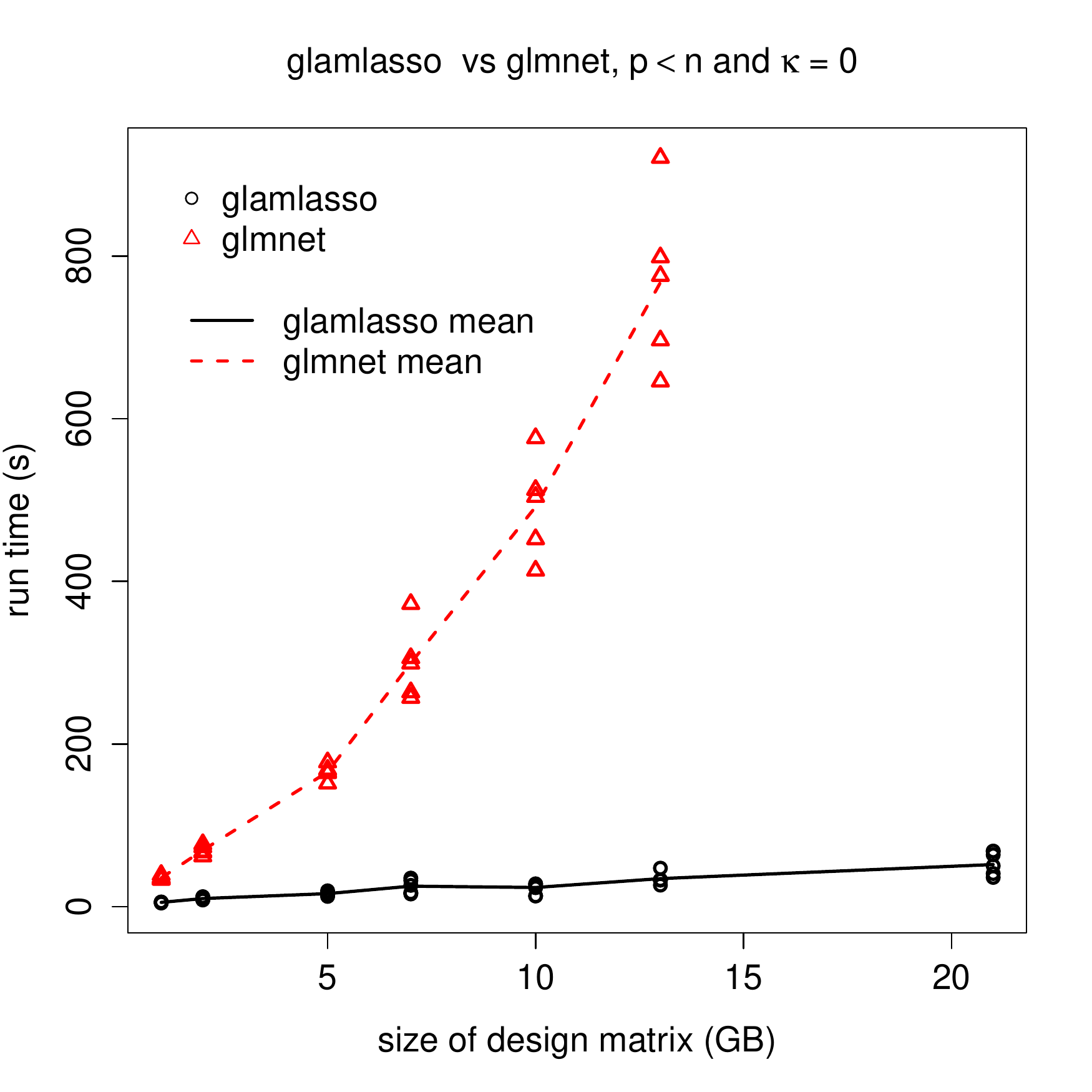}
\includegraphics[scale = 0.45]{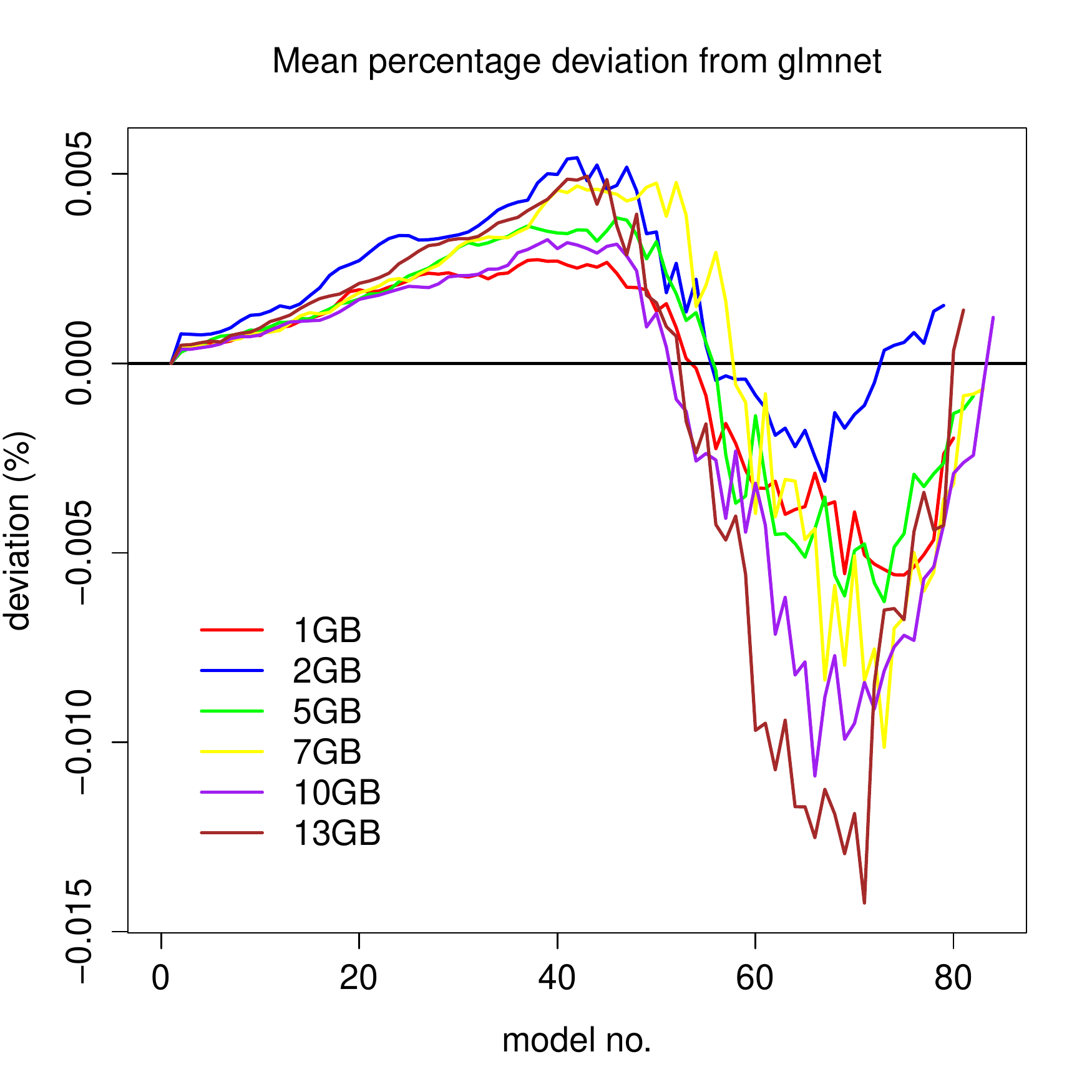}\\
\includegraphics[scale = 0.45]{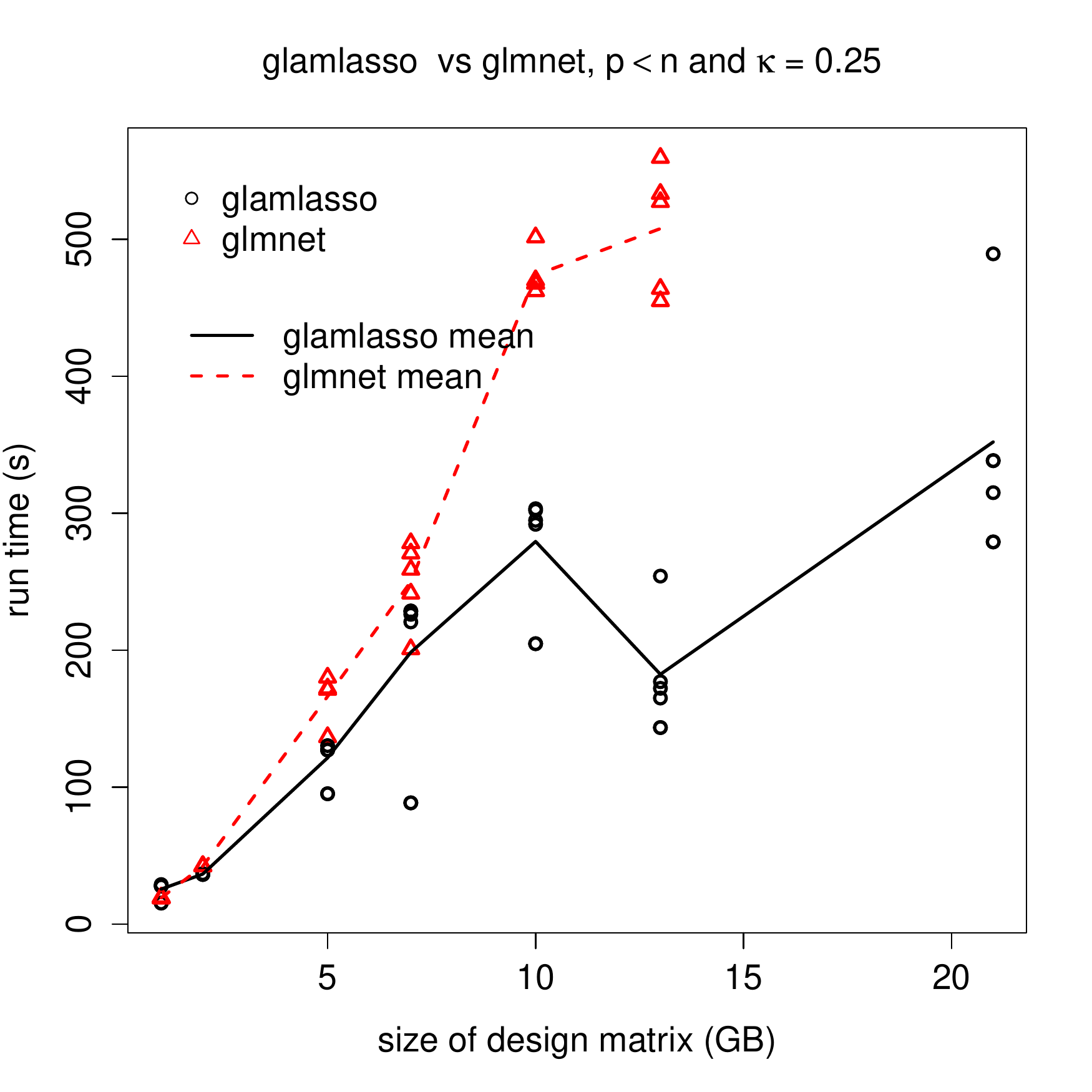}
\includegraphics[scale = 0.45]{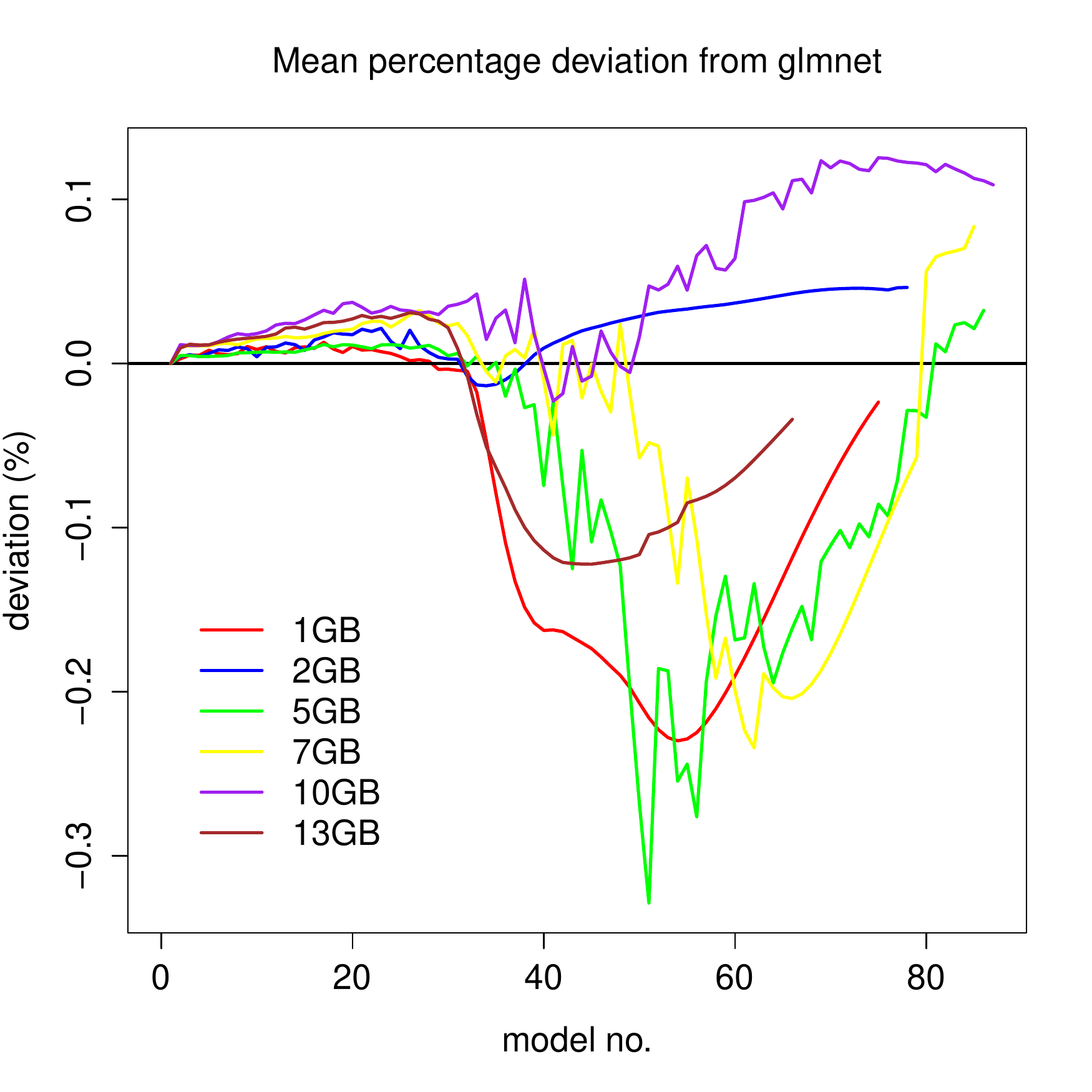}
\caption{Benchmark results for the Gaussian models and $p < n$. Run time in
  seconds is shown as a function of the size of the design matrix in
  GB (left). Relative mean deviation in the attained objective function
  values as given by (20) is shown as a function of model number
  (right).  The top row gives the results for $\kappa=0$ and the
  bottom for $\kappa=0.25$.}
\label{fig:one}
\end{center}
\end{figure}

\begin{figure}
\begin{center}
\includegraphics[scale = 0.45]{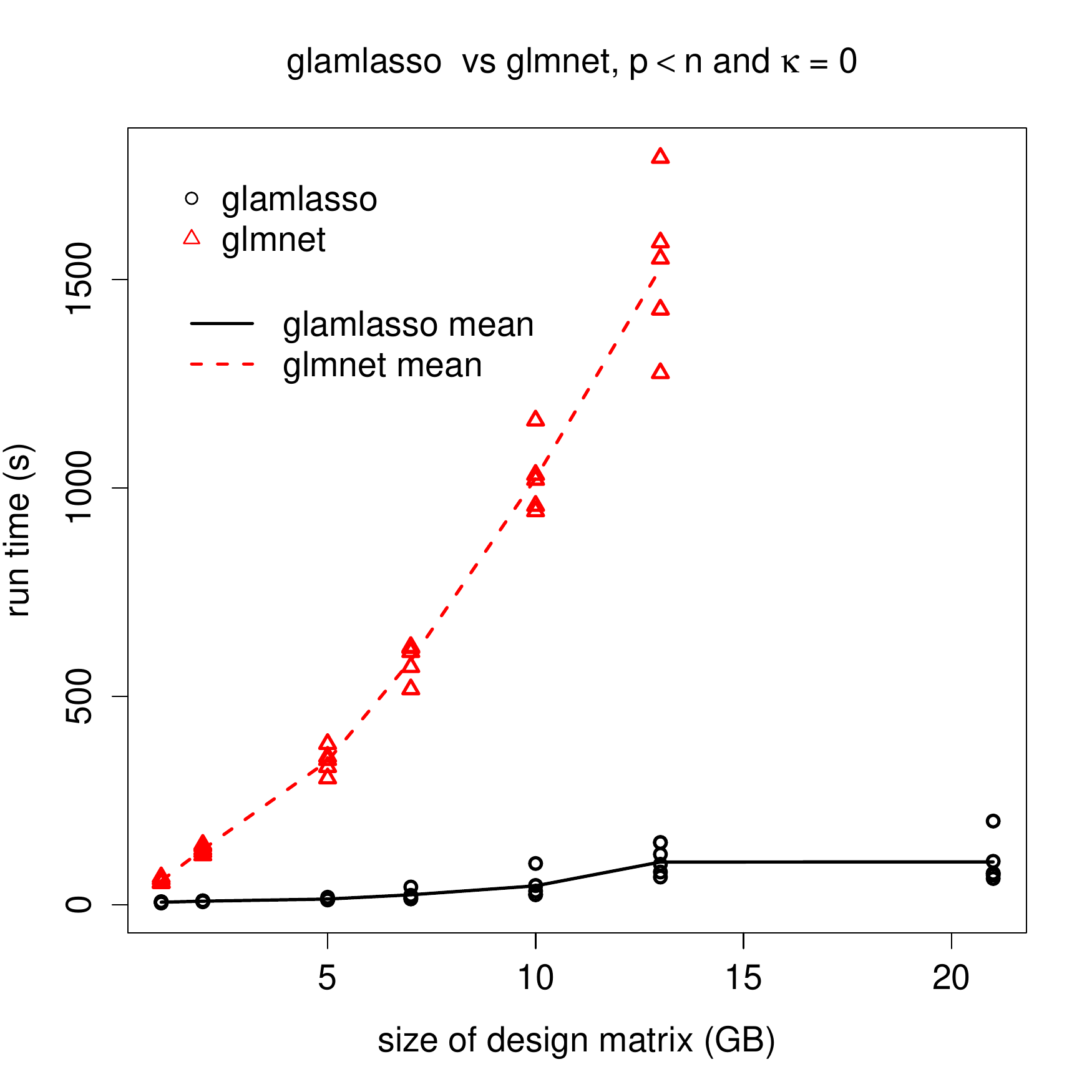}
\includegraphics[scale = 0.45]{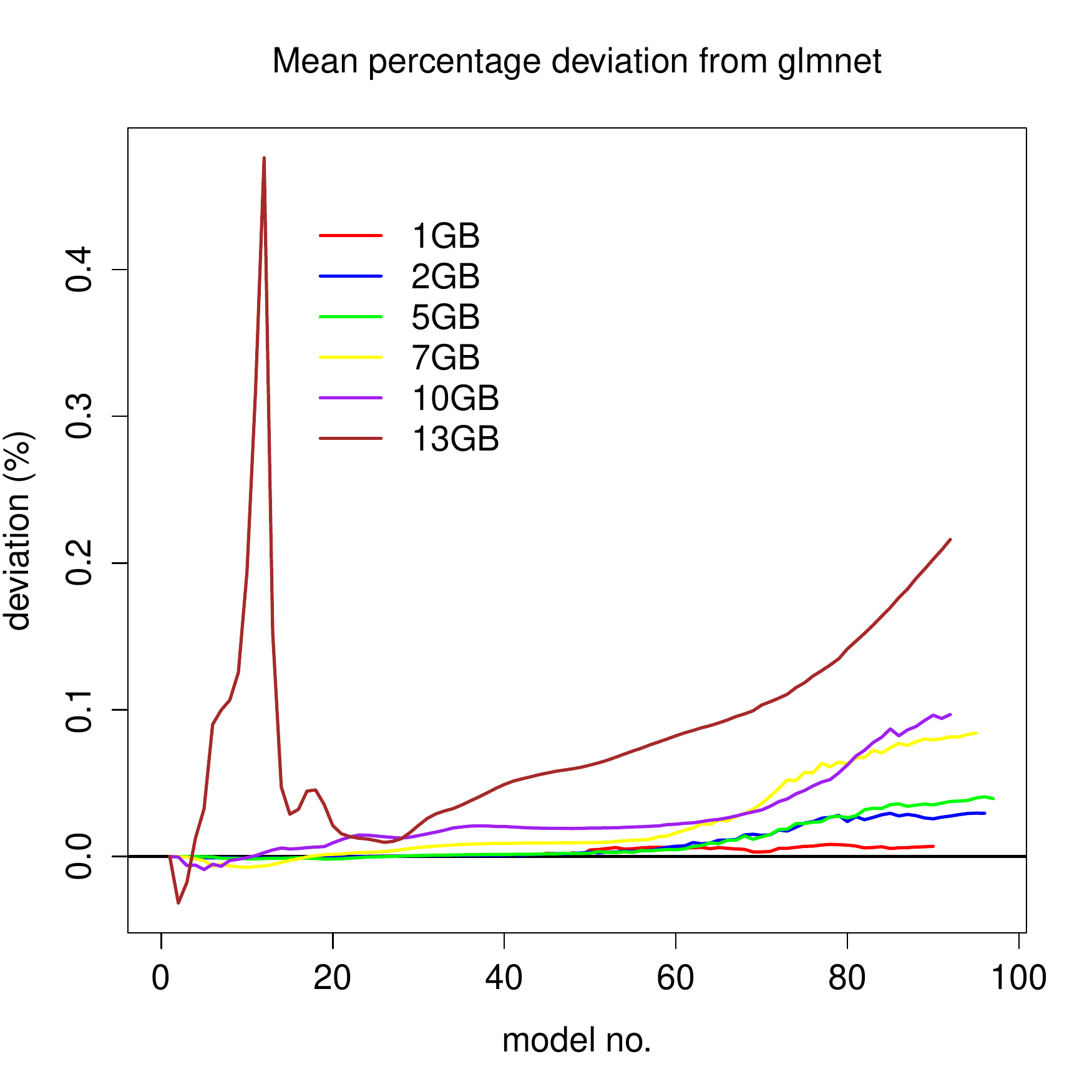}
\includegraphics[scale = 0.45]{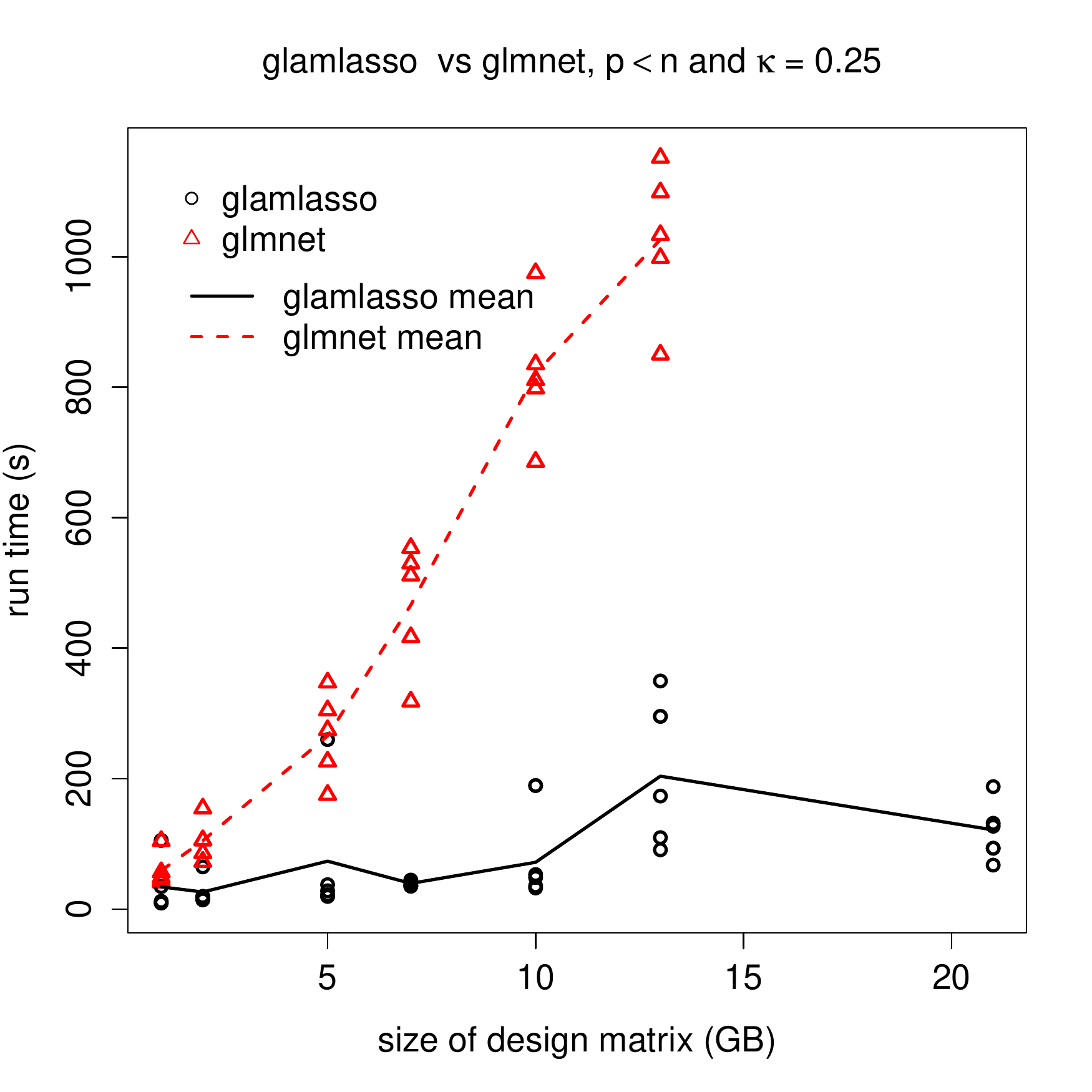}
\includegraphics[scale = 0.45]{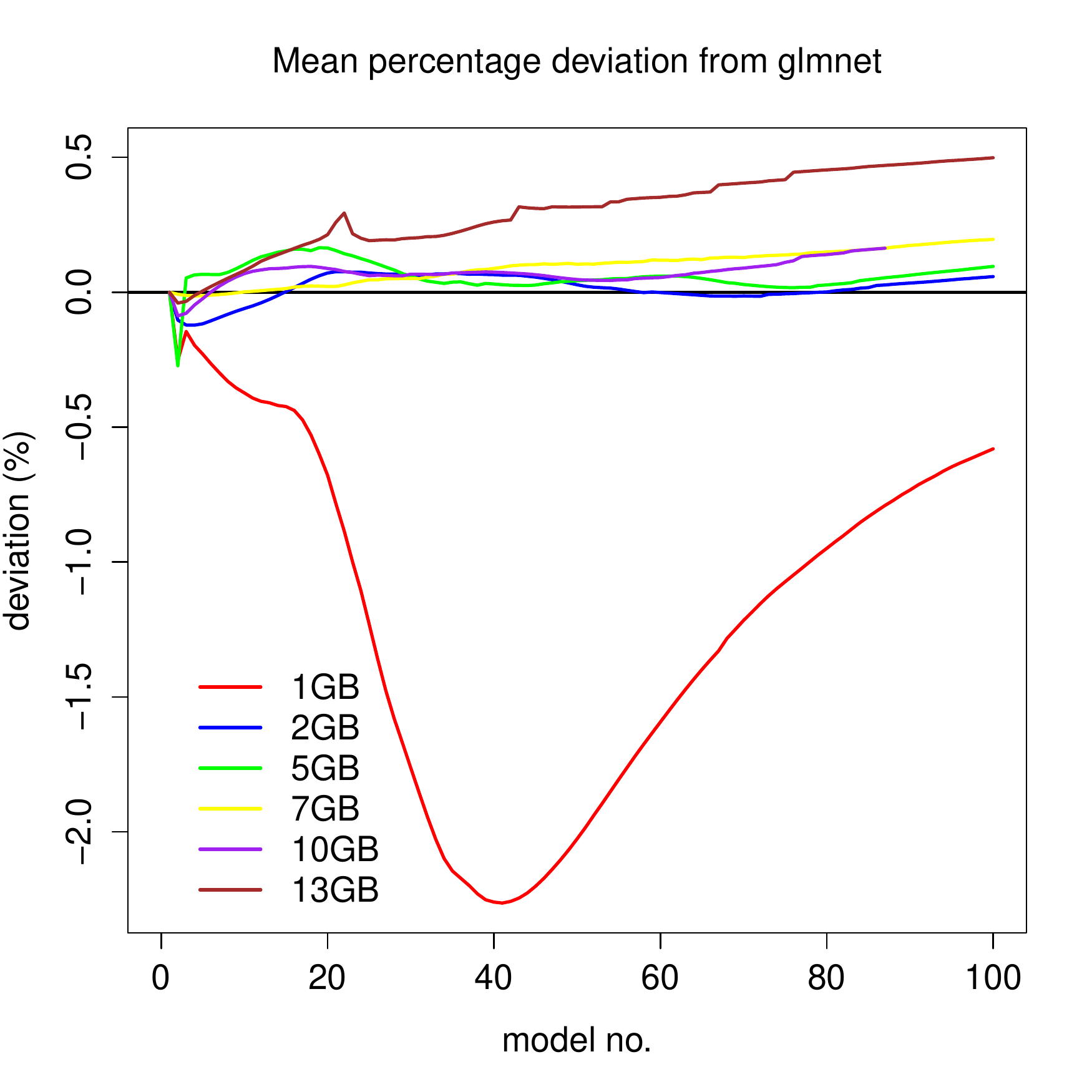}
\caption{Benchmark results for the Poisson models and $p < n$. Run time in
  seconds is shown as a function of the size of the design matrix in
  GB (left). Relative mean deviation in the attained objective function
  values as given by (20) is shown as a function of model number
  (right). The top row gives the results for $\kappa=0$ and the
  bottom for $\kappa=0.25$.}
\label{fig:two}
\end{center}
\end{figure}

\begin{figure}
\begin{center}
\includegraphics[scale = 0.45]{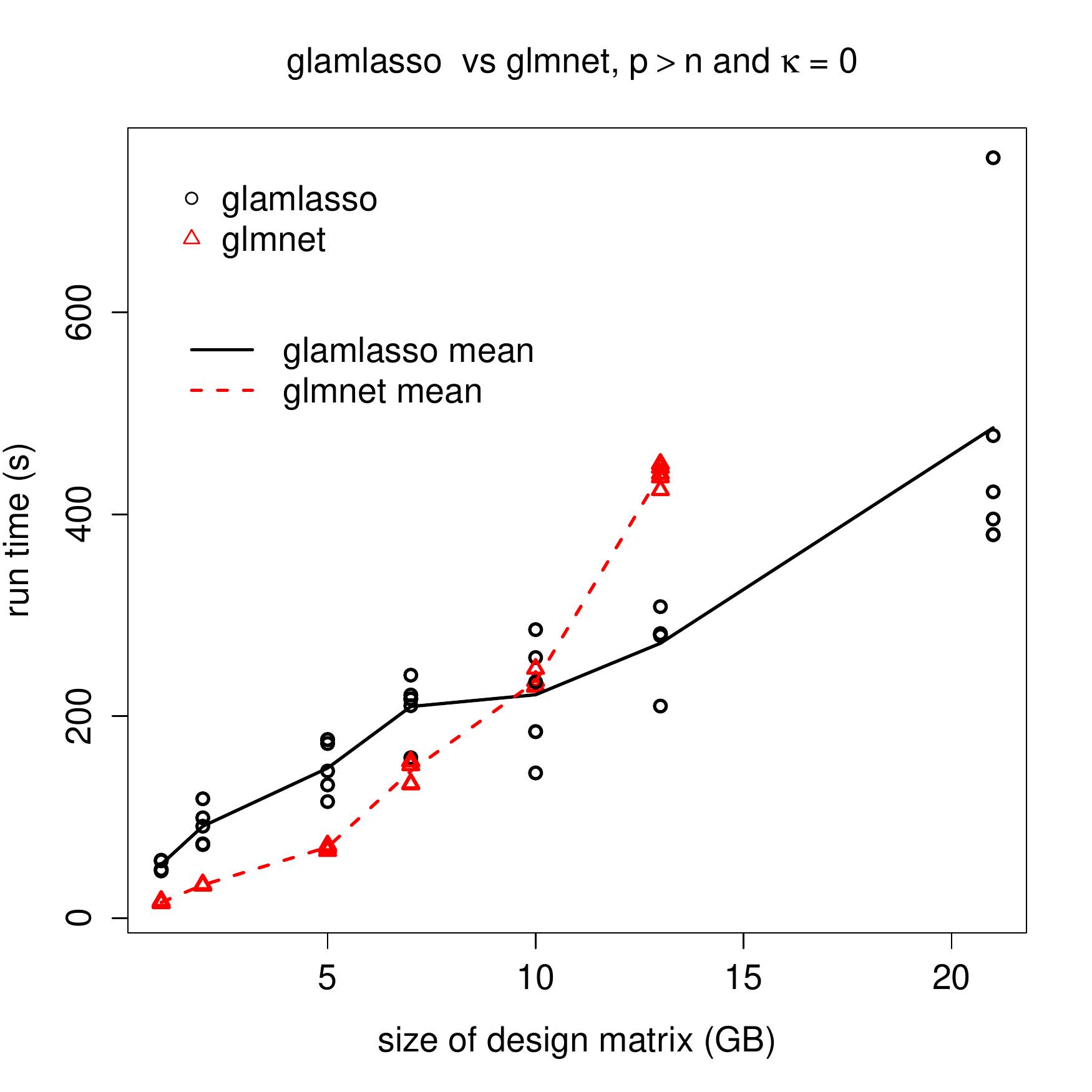}
\includegraphics[scale = 0.45]{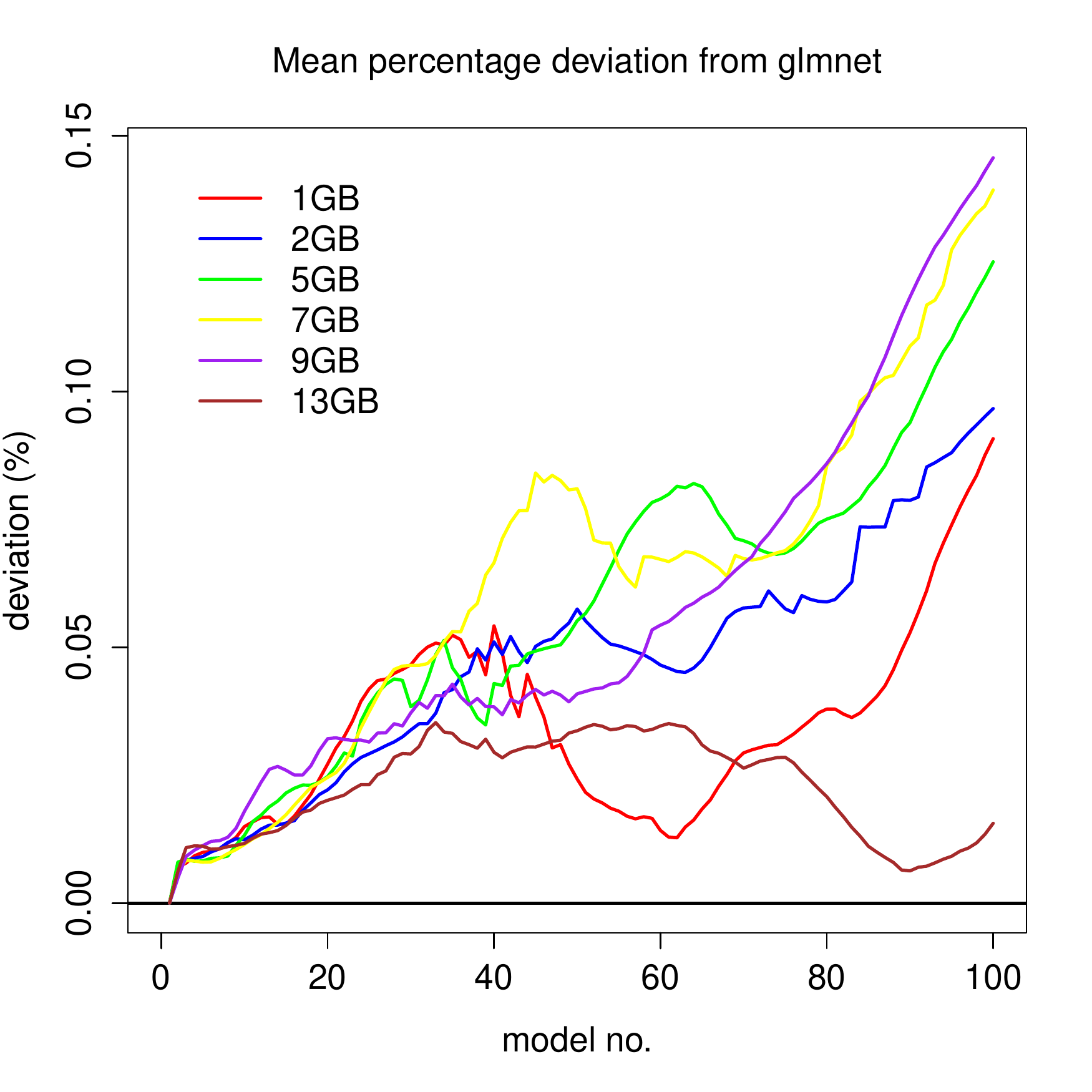}\\
\includegraphics[scale = 0.45]{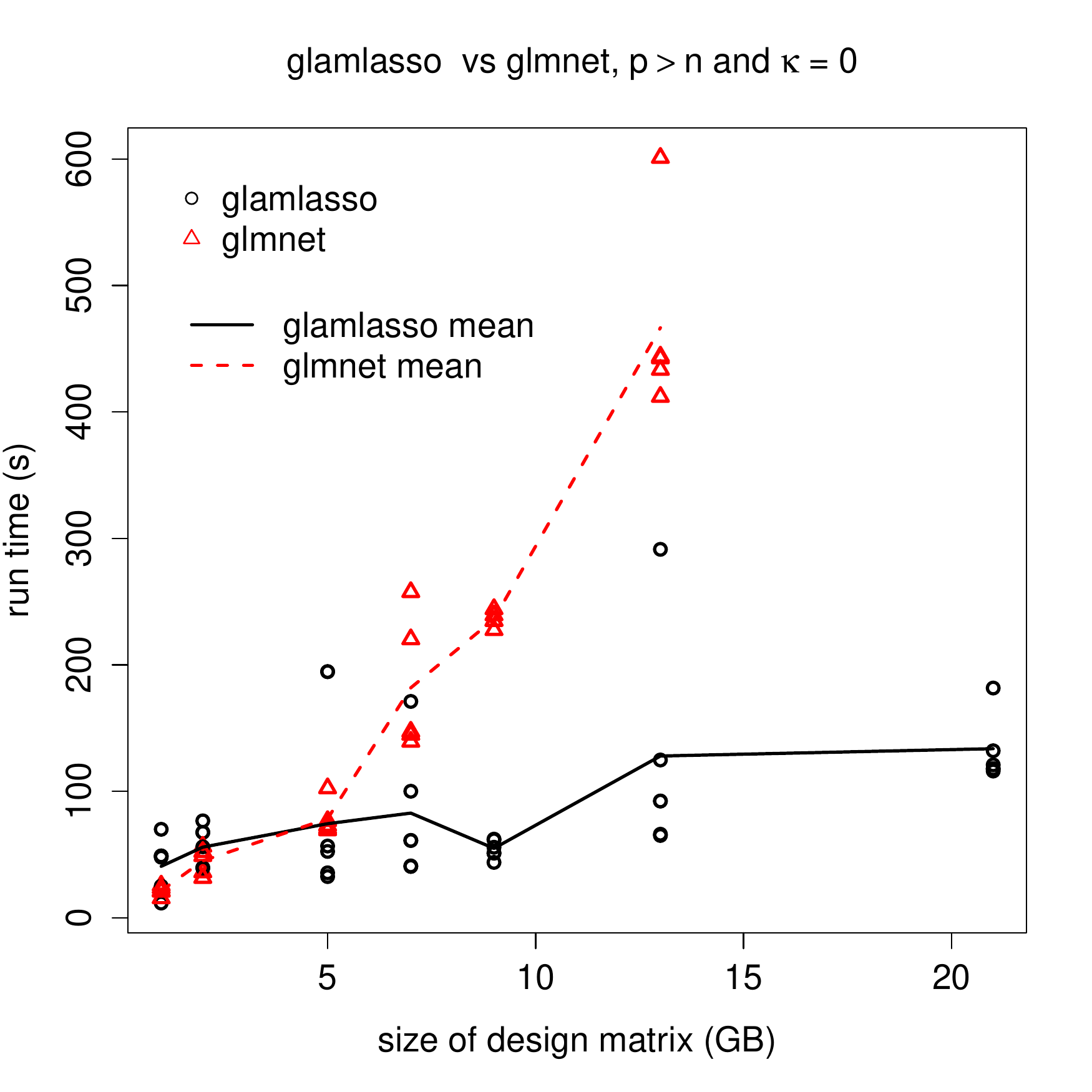}
\includegraphics[scale = 0.45]{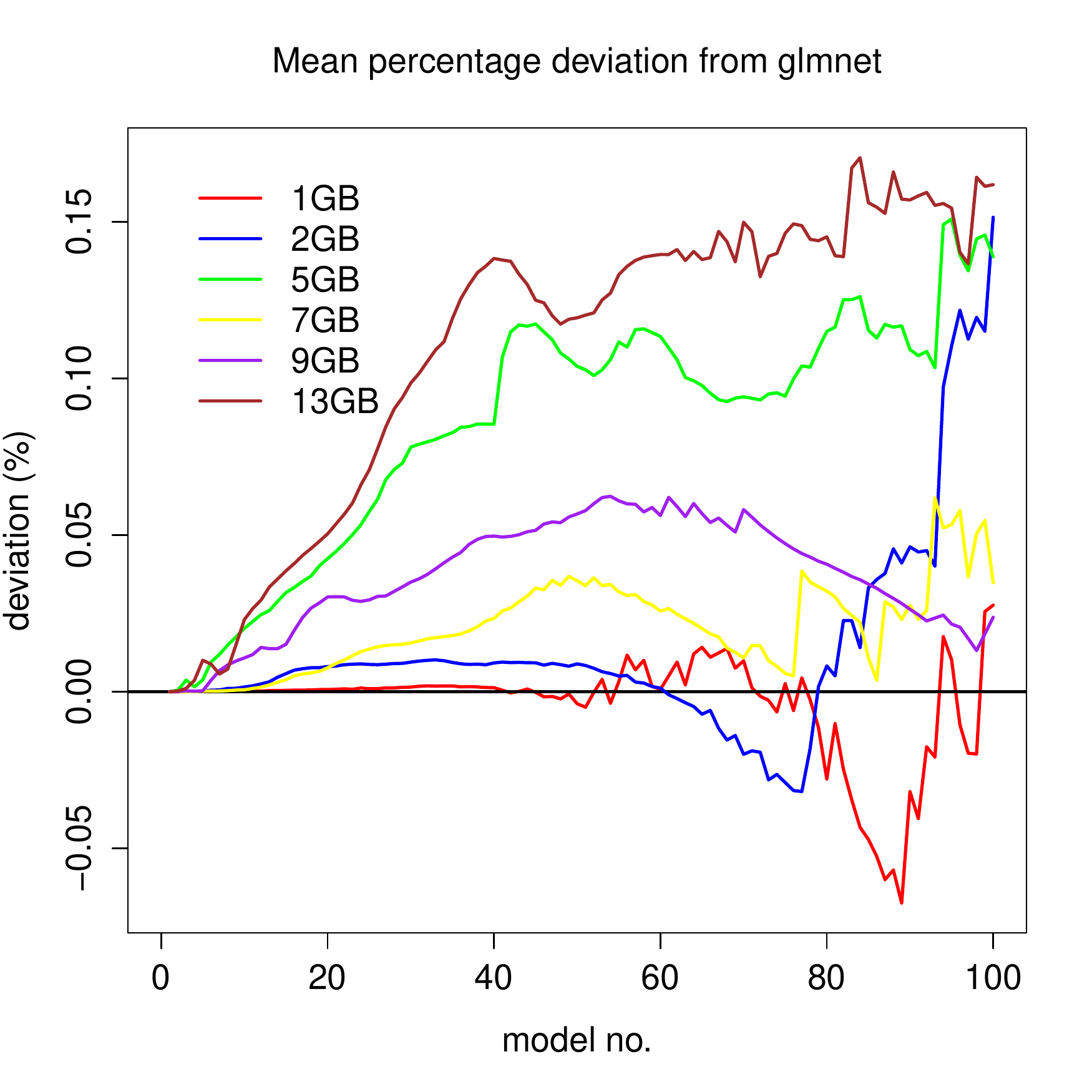}
\caption{Benchmark results for $p > n$. Run time in
  seconds is shown as a function of the size of the design matrix in
  GB (left). Relative mean deviation in the attained objective function
  values as given by (20) is shown as a function of model number
  (right).  The top row gives the results for the Gaussian model and the
  bottom for Poisson model.}
\label{fig:three}
\end{center}
\end{figure}

Figure \ref{fig:one} shows the results for the Gaussian models for $p
< n$. Here \verb+glamlasso+ generally outperformed
\verb+glmnet+ in terms of run time -- especially for $\kappa = 0$.  It scaled well with the size of the design
matrix and it could fit the model
for large design matrices that \verb+glmnet+ could not handle. 

It should be noted that for the Gaussian models with the identity link
there is no outer loop, hence the comparison is in this case
effectively between the  (GLAM enhanced) proximal gradient
algorithm and the coordinate descent algorithm as implemented in
\verb+glmnet+.
 
Figure \ref{fig:two} shows the results for the Poisson models for $p <
n$. As for the Guassian case, \verb+glamlasso+ was generally faster
than \verb+glmnet+. The run times for
\verb+glamlasso+ also scaled  very well with the size of the
design matrix for both values of $\kappa$. 

Figure \ref{fig:three} shows the results for both models for $p > n$ and
$\kappa = 0$. Here the run times were comparable for small design
matrices, with \verb+glmnet+ being a little faster for the Gaussian model,
but \verb+glamlasso+ stilled scaled better with the size of the design
matrix. For $\kappa > 0$ (results not shown) \verb+glamlasso+ retained
its benefit in terms of memory usage, but \verb+glmnet+ became
comparable or even faster for the Gaussian model than \verb+glamlasso+.

 In the comparisons above we have not included the time it took to construct the actual
  design matrix for the \verb+glmnet+ procedure. However, the
  construction and handling of matrices, whose size is a substantial
  fraction of the computers memory, was quite time consuming (between 15 minutes and
  up to one hour) underlining the advantage of our design matrix free
  method.

\section{Discussion}\label{sec:disc}

The algorithm implemented in the R package \verb+glmnet+ and described in
\cite{friedman2010} computes the penalized and weighted
least squares estimate given by \eqref{eleven} by a coordinate descent
algorithm. For penalty functions like the 1-norm that induce sparsity
of the minimizer, this is recognized as a very efficient
algorithm. Our initial strategy was to adapt the coordinate descent
algorithm to GLAMs so that it could take advantage of the tensor
product structure of the design matrix. It turned out to be
difficult to do that. It is straight forward to implement a memory
efficient version of the coordinate descent algorithm that does
not require the storage of the full tensor product design matrix, but
it is not obvious how to exploit the array structure to reduce the
computational complexity. Consequently, our implementation of such an
algorithm was outperformed by \verb+glmnet+ in terms of run time, and
for this reason alternatives to the coordinate descent algorithm were
explored. 

Proximal gradient algorithms for solving nonsmooth optimization
problems have recently received renewed attention. One reason is that
they have shown to be useful for large-scale data analysis problems, 
see e.g. \cite{Parikh2014}. In the image analysis literature  the
proximal gradient algorithm for a squared error loss with an
$\ell_1$-penalty is known as ISTA
(iterative selection-thresholding algorithm), see \cite{beck2009} and
\cite{Beck2010}. The accelerated version with a specific acceleration sequence 
was dubbed FISTA (fast ISTA) by \cite{beck2009}. For small-scale problems and unstructured design
matrices it is our experience that the coordinate descent algorithm outperforms 
accelerated proximal algorithms like FISTA.  This observation is
  also in line with the more  systematic comparisons presented in
  Section 5.5 in \cite{hastie2015}. For large-scale problems
and/or structured design matrices -- such as the tensor product design
matrices considered in this paper -- the proximal gradient algorithms
may take advantage of the structure. The Gaussian smoothing
example demonstrated that this is indeed the
case. 

When the squared error loss is replaced by the negative log-likelihood
our proposal is similar to the approach taken in \verb+glmnet+, where 
penalized weighted least squares problems are solved iteratively by an
inner loop. The main difference is that we suggest using a proximal
gradient algorithm instead of a coordinate descent algorithm for the
inner loop. Including weights is  only a trivial modification of FISTA from
\cite{beck2009}, but the weight matrix commonly used for fitting GLMs is
not a tensor product. Despite of this it is still possible to exploit the tensor
product structure to speed up the inner loop, but by making a tensor
approximation to the weights we obtained in some cases further improvements. For
this reason we developed the GD-PG algorithm with an arbitrary choice of weights. The Poisson
smoothing example demonstrated that when compared to coordinate
descent the inner PG loop was capable of taking advantage of the tensor
product structure. 

The convergence analysis combines general results from the
optimization literature to obtain convergence results for the inner proximal algorithm and the
outer gradient based descent algorithm. These results are strongest
when the design matrix has rank $p$ (thus requiring $p \leq
n$). Convergence for $p > n$ would require additional
assumptions on $J$, which we have not explored in any detail. Our
experience for $J = \Vert \cdot \Vert_1$ is that the algorithm converges in
practice also when $p > n$. Our most important contribution to the
convergence analysis is the computation of the upper bound
$\hat{L}^{(k)}$ of the Lipschitz constant $L^{(k)}$. This upper bound
relies on the tensor product structure. For large-scale problems the
computation of $ L^{(k)}$ will in general be infeasible due to the
size of  $X^{\top} W^{(k)}X$. However, for
the tensor product design matrices considered, the upper bound is
computable, and a permissible stepsize
$\delta^{(k)}$ that ensures convergence of the inner PG loop can be
chosen. 

It should be noted that the GD-PG algorithm requires minimal
assumptions on $J$, but that the proximal operator associated with $J$
should be fast to compute for the algorithm to be efficient. Though it
has not been explored in this paper, the generality allows for the
incorporation of convex parameter contraints. For box constraints $J$
will be separable and the proximal operator will be fast to compute. 

The simulation study confirmed what the smoothing applications had
showed, namely that the GD-PG algorithm with $J = \Vert \cdot \Vert_1$
and its implementation in the R package \verb+glamlasso+ scales well
with the problem size. It can, in particular, efficiently handle
problems where the design matrix becomes prohibitively large to be
computed and stored explicitly. Moreover, in the simulation study the
run times were in most cases smaller than or comparable to that of
$\verb+glmnet+$ even for small problem sizes.  However, the
  simulation study also revealed that when $p > n$ the run time
  benefits of $\verb+glamlasso+$ over $\verb+glmnet+$ were small or
  dimished completely -- in
  particular for small problem sizes. One explanation could be that
  $\verb+glmnet+$ implements a screening rule, which is particularly
  beneficial when $p > n$. It appears to be difficult to combine such
  screening rules with the tensor product structure of the design
  matrix. When $p < n$, as in the smoothing applications,
\verb+glamlasso+ was, however, faster than \verb+glmnet+ and scaled
much better with the size of the problem. This was true even when a
sparse representation of the design matrix was used, though
\verb+glmnet+ was faster and scaled better with the size of the design
matrix in this case for both examples. It should be noted that
\verb+glamlasso+ achieves its performance without relying on sparsity
of the design matrix, and it thus works equally well for smoothing
with non-local as well as local basis functions.

In conclusion, we have developed and implemented an algorithm for
computing the penalized maximum likelihood estimate for a
GLAM. When compared to \cite{currie2006} our focus has been on nonsmooth penalty functions that
yield sparse estimates. It was shown how the proposed GD-PG algorithm
can take advantage of the GLAM data structure, and it was demonstrated
that our implementation is both time and memory efficient. The smoothing examples illustrated how
GLAMs can easily be fitted to 3D data on a standard laptop computer
using the R package \verb+glamlasso+.    

 \section{Supplementary Materials}\label{sec:supp}
\begin{description}
\item[SuppMatJCGS] SuppMatJCGS is a folder containing scripts and datasets used in the  examples in sections \ref{subsubsec:neuro}, \ref{subsubsec:taxi},  \ref{sec:incomplete} and \ref{subsec:simexp} along with a ReadMe file. (SuppMatJCGS.zip, zipped file).
\end{description}

\bibliographystyle{chicago} 
\bibliography{Bibliotek}
 
\appendix

\newpage

\section{The maps $\ve$ and $\rho$} \label{app:rho}

The map $\ve$ maps an $n_1 \times \ldots \times n_d$ array to a $\prod_{i=1}^d n_d$-dimensional vector. This is sometimes known as
``flattening'' the array. For $j = 1, \ldots, d$ and $i_j
= 1, \ldots, n_j$ introduce the integer
\begin{alignat}{4}
[i_1, \ldots, i_d] \coloneqq i_1 + n_1((i_2 - 1) + n_2((i_3 - 1) +
\ldots n_{d-1} (i_d - 1) \ldots )).
\label{index}
\end{alignat} 
Then $\ve$ is defined as 
\begin{alignat}{4}
\ve(A)_{[i_1, \ldots, i_d]} \coloneqq A_{i_1, \ldots, i_d}
\end{alignat}
for an array $A$. This definition of $\ve$ corresponds to flattening
a matrix in column-major order.

Following the definitions in \cite{currie2006} (see also \cite{deboor1979} and
\cite{buis1996}), $\rho$ maps an $r
\times n_1$ matrix and an $n_1 \times \ldots \times n_d$ array to
an $n_2 \times \ldots \times n_d \times r$ array. With $X$ the
matrix and $A$ the array then 
\begin{alignat}{4}
\rho(X, A)_{i_1, \ldots, i_d} \coloneqq \sum_{j} X_{i_d, j} A_{j, i_1, \ldots, i_{d-1}}.
\end{alignat}
From this definition it follows directly that 
\begin{alignat*}{4}
(X_d \otimes \ldots \otimes X_1) \ve(A)_{[i_1,\ldots, i_d]}  &= \sum_{j_1,
  \ldots, j_d} X_{d,i_d,j_d} \cdots  X_{1,i_1,j_1} A_{j_1,\ldots,
  j_d}  \\
&=  \sum_{j_d} X_{d,i_d,j_d} \cdots \sum_{j_2} X_{2, i_2, j_2}
\sum_{j_1}  X_{1,i_1,j_1} A_{j_1,\ldots, j_d}  \\
&=  \rho(X_d, \ldots, \rho(X_2, \rho(X_1, A))\ldots)_{i_1,\ldots, i_d} 
\end{alignat*}
where $[i_1,\ldots, i_d]$ denotes the index defined by \eqref{index}.

\section{Exponential families} \label{sec:expfam}
The exponential families considered are distributions on
$\mathbb{R}$ whose density is
\begin{alignat*}{4}
  f_{\vartheta,\psi}(y)=\exp\Big(\frac{a(\vartheta y-b(\vartheta))}{\psi}\Big)
\end{alignat*}
w.r.t. some reference measure. Here $\vartheta$ is the  canonical
(real valued) parameter, $\psi > 0$ is the
dispersion parameter, $a > 0$ is a known and fixed weight and $b$ is
the log-normalization constant as a function of $\vartheta$ that ensures that the
density integrates to 1. In general,  $\vartheta$ may have to be
restricted to an interval depending on the reference measure
used. Note that the reference measure will depend upon
$\psi$ but not on $\vartheta$. 

With $\eta$ denoting the linear predictor in a generalized linear
model we regard $\vartheta(\eta)$ as a parameter function that maps
the linear predictor to the canonical parameter, such that the mean
equals $g^{-1}(\eta)$ when $g$ is the link function. From this it
can easily be derived that $b'(\vartheta(\eta)) = g^{-1}(\eta)$. For a canonical
link function, $\vartheta(\eta) = \eta$ and $b' = g^{-1}$. In terms of $\eta$ the
log-density can be written as 
\begin{alignat*}{4}
  \log f_{\vartheta(\eta),\psi}(y) \propto a (\vartheta(\eta)y-b(\vartheta(\eta))).
\end{alignat*}
From this it follows that 
\begin{alignat}{4}
\partial_{\eta} \log f_{\vartheta(\eta),\psi}(y)  =a \vartheta'(\eta)( y - g^{-1}(\eta)),
\end{alignat}
and the score statistic, $u = \nabla_{\eta} l(\eta)$, entering in \eqref{three}
is thus given by 
\begin{alignat}{4} \label{four}
u_i = a_i \vartheta'(\eta_i)( y_i - g^{-1}(\eta_i)), \quad i =
1, \ldots n. 
\end{alignat}

The weights commonly used when fitting a GLM are 
\begin{alignat}{4}\label{weights}
w_i = \vartheta'(\eta_i) (g^{-1})'(\eta_i),
\end{alignat}
which are known to be strictly positive provided that $(g^{-1})'$ is
nonzero everywhere (thus $g^{-1}$ is strictly monotone). This is not
entirely obvious, but $w_i$ is the variance of $u_i$ (with $a_i =
1$ and $\psi = 1$), which is nonzero whenever $(g^{-1})'$ is nonzero everywhere. 

We may note that when the weights are given by \eqref{weights}, the working
response $z$, see \eqref{ten}, given the linear predictor $\eta$ can be computed as
\begin{alignat}{4}\label{tennew}
z_i = a_i(y_i- g^{-1}(\eta_i))g'(g^{-1}(\eta_i))+\eta_i,
\end{alignat}
which renders it unnecessary to compute the intermediate score statistic.


%
%
%
%
%
%
%
%
%
\end{document}